\documentclass[12pt]{iopart}
\usepackage{iopams, amsthm}
\usepackage{cite}
\usepackage{url}

\theoremstyle{plain}
\newtheorem{thm}{Theorem}
\newtheorem{prop}[thm]{Proposition}
\newtheorem{lem}[thm]{Lemma}
\newtheorem{cor}[thm]{Corallary}
\theoremstyle{definition}
\newtheorem{defn}[thm]{Definition}

\theoremstyle{remark}
\newtheorem{rmk}[thm]{Remark}

\begin{document}
\title[Regular Hyperbolicity, DEC, and Causality]{Regular hyperbolicity, dominant energy condition and causality for
Lagrangian theories of maps}
\author{Willie Wai-Yeung Wong}
\address{Department of Pure Mathematics and Mathematical Statistics\\
University of Cambridge\\ Cambridge, UK}
\ead{\mailto{ww278@dpmms.cam.ac.uk}}

\begin{abstract}
The goal of the present paper is three-fold. First is to clarify the
connection between the dominant energy condition and hyperbolicity
properties of Lagrangian field theories. Second is to provide further
analysis on the breakdown of hyperbolicity for the Skyrme model,
sharpening the results of Crutchfield and Bell and comparing against
a result of Gibbons, and provide a local well-posedness result for the
dynamical problem in the Skyrme model. Third is to provide a short 
summary of the
framework of regular hyperbolicity of Christodoulou for the relativity
community. In the process, a general theorem about dominant energy
conditions for Lagrangian theories of maps is proved, as well as
several results concerning hyperbolicity of those maps. 
\end{abstract}

\pacs{1110Ef, 1239Dc}


\section{Introduction}
In the study of classical field theories, an oft-imposed ``physical
assumption'' is the \emph{dominant energy condition}, which requires
that the Einstein-Hilbert stress-energy tensor $T$ appearing
on the ``wood'' side \cite{Einste1936} 
of Einstein's equation to have the property that,
for any future causal vectors $X$ and $Y$, the contraction
$T(X,Y)$ is non-negative. \emph{A posteriori} this condition
seems reasonable, in view of the results that can be derived from it.
Two of the most prominent examples are, of course, the Singularity
Theorem of Penrose \cite{Penros1965} and the Positive Mass Theorem
\cite{SchYau1981, Witten1981}. However, these results are purely
results in Lorentzian geometry: that is, Einstein's equation is wholly
unnecessary, except to transfer the dominant energy condition from the
``wood'' side to the ``marble'' side of the equation. In other words,
those theorems could equally well have been stated without reference
to general relativity, but merely with certain positivity conditions
about the Einstein tensor $G = Ric - \frac12 Rg$. 

The typical \emph{a priori} justification
for the imposition of the dominant energy condition is some
heuristic requiring that the flow of energy be at a speed less than
that of gravity\footnote{If one is willing to assume a linear or 
semilinear theory of electromagnetism, then also speed of light.}, that
is, a property on the classical field variously known as 
\emph{finite speed of propagation}, \emph{domain of dependence} or
\emph{causality} (see, e.g.\ \cite{Wald1984}). Perhaps the most
well-known in this regard is a theorem of Hawking \cite{HawEll1973}
which makes precise this notion: that if a matter field
satisfies the dominant energy condition and the energy is strongly
coercive (that is, $T(X,Y) = 0$ for non-zero vectors $X$, $Y$ only when the
matter field vanishes), then if the matter field vanishes in some
space-time region, it must also vanish in its domain of dependence. 
Typically, a proof of the domain of dependence property uses an
\emph{energy estimate}, which is the curved space-time version of the
law of energy conservation on Minkowski space-time. And therefore,
it is usually expected that a domain of dependence property comes hand
in hand with the hyperbolicity of the associated matter evolution. 

This is, however, not always the case. The principal obstruction is
that Hawking's theorem only guarantees the finite speed of propagation
of ``perturbations of vacuum''. That is, it essentially only
guarantees that the edge of vacuum cannot recede faster than the speed
of gravity. For semilinear field theories where the kinematics in the
high frequency limit is always governed by the space-time metric,
there can be no distinction between perturbations of vacuum and
perturbations of a given solution. And hence the argument used by
Hawking to establish his domain of dependence theorem generally
indicates hyperbolicity for the associated matter field. For 
quasilinear field theories, however, the
strong self-interaction means that the kinematics close to a vacuum
background can be significantly different from that around a ``large''
solution. And for these types of theories, dominant energy condition
is not sufficient to guarantee hyperbolicity. 

A prime example of this difficulty is illustrated by two results
related to the Skyrme model of nuclear physics. Motivated by unstable
numerical simulations, Crutchfield and Bell demonstrated
\cite{CruBel1994} that highly boosted background solutions to the
Skyrme model is linearly unstable under perturbations (that there
exists an exponentially growing mode). On the other hand, it was shown
by Gibbons \cite{Gibbon2003} that the Skyrme model in fact enjoys the
dominant energy condition. In the current paper, a more detailed
analysis of the breakdown of hyperbolicity for the Skyrme model will
be presented. The analysis is based on the Christodoulou's regular
hyperbolicity framework \cite{Christ2000}.

(One may also ask whether the reverse implication is true: that 
hyperbolic matter models will always enjoy a dominant energy condition. 
A simple example showing its falsehood is the linear wave equation
with a negative potential $\Box u = -u_{tt} + \triangle u = Vu$, where
the potential $V \leq 0$. If one were to desire theories without
external potentials, one can also consider the focusing nonlinear
wave equation $\Box u = - |u|^p u$, the hyperbolicity, or local 
well-posedness, of which
is well-known (see, e.g.\ \cite{ShaStr1998}). In regions
where the $|u|$ is large while approximately constant, the associated
Einstein-Hilbert stress-energy tensor for either of the above examples
violates the dominant energy condition.)

In the present paper the connections between the dominant energy
condition and hyperbolicity will be studied in the context of
Lagrangian theory of maps. An emphasis will be placed on the
insufficiency for mutual implication. The reasons for the focus on such
matter models are twofold. First is its general applicability. Many
models of mathematical physics can be cast in the framework of
Lagrangian theory of maps. Starting from the simple linear wave
equation, which can be regarded as a map from Minkowski space to the
real line or to the complex plane, we can modify the target space to a
general Riemannian manifold and obtain what is called the nonlinear
$\sigma$-model in the physics literature, or the wave-map system in
mathematics. This system is itself interesting as models in
high-energy physics (see \cite{Jost2009} and references therein) 
or as symmetry reductions from Einstein's
equations in general relativity (see, e.g.\ \cite{Camero1991,
ChoMon2001, Mazur1982}). As a semilinear modification to the standard
wave equation with a geometric interpretation, the well-posedness
properties (both global and local, and in both small and large data
regimes) have been well-studied, see
\cite{RodSte2010, RapRod-p2009, Tao-pHW3, Tao-pHW4, Tao-pHW5,
Tao-pHW6, Tao-pHW7, KriSch-p2009} for some recent progress on global
regularity and singularity formation in the large data regime, and
\cite{Choque1987, KlaMac1995, KlaMac1997, KlaSel1997, Sideri1989,
Tao2001, CaShTa1998, Shatah1988} for a sample of classical results in
this area. 

In this paper we will focus on \emph{quasilinear} modifications to the
wave-map system, whose dynamics is comparatively less well-studied. Such
generalisations also have wide physical applications, with examples
in the nonlinear $\sigma$-model hierarchy including the Skyrme model
\cite{Skyrme1954, Skyrme1961, ManSut2004}, the membrane equation 
\cite{Hoppe1994, Lindbl2004}, a Born-Infeld type model in cosmology
\cite{Johnso2003, Leigh1989}, and models of hydro- and elastodynamics
\cite{BeiSch2003, Bressa1978, CarQui1972, Christ1998, KijMag1992,
Salt1971, Slobod-p2010, Tahvil1998}. The models of dynamics in a
continuous medium are particularly interesting in this context, as
generally a physical assumption in such models is that the particle
world-lines are time-like, a condition necessary to guarantee the
causality of the matter model. It will
be shown in this paper that one can construct examples of equations of
states for which the dominant energy condition is satisfied
\emph{independently} of whether the physical constraint is imposed.
This further reinforces the idea that the domain of dependence theorem
of Hawking is only a statement about vacuum perturbations. 

The second reason for considering Lagrangian theories of maps is
purely technical. For Lagrangian field theories, the Einstein-Hilbert
stress-energy can be defined via a variational procedure on the
Lagrangian density. This allows for general and efficient calculations to
check the dominant energy condition. Furthermore, in the context of
Lagrangian theories of maps, the regular hyperbolicity framework of 
Christodoulou \cite{Christ2000} provides a powerful while
algebraically simple characterisation of local well-posedness.
Therefore we will consider only such matter models for ease of
discussion. 

The paper is organised as follows. In Section \ref{sect:lag}
we review the Lagrangian field theory of maps and give some examples
that have appeared in the literature. In Section
\ref{sect:dec} a geometric method is described for computing the
Einstein-Hilbert stress-energy tensor for a large class of maps which
includes the physically interesting models described above. The method
provides an easy way to verify the dominant energy condition for these
maps; we recover the result of Gibbons \cite{Gibbon2003} as a special
case. In Section \ref{sect:rh} we briefly describe the philosophy and 
method of regular hyperbolicity of Christodoulou
\cite{Christ2000}, and recall the notion of canonical stress. 
Here only the basic ideas behind the theory of regular hyperbolicity
will be sketched, the focus being on its application. In
a forthcoming paper with Jared Speck \cite{SpeWon-pPP}, a detailed 
gentle introduction to regular hyperbolicity will be given, along 
with some simple extensions that were alluded to, without proof, in 
Christodoulou's monograph. And in Section \ref{sect:ex}
we apply the theory to the problem of hyperbolicity of
the Skyrme model. 

\section{Lagrangian theory of maps}\label{sect:lag}
Throughout we let $(M,g)$ be an $m+1$ dimensional Lorentzian manifold,
where sign convention is taken to be $(-,+,+,\ldots)$; and we let
$(N,h)$ be an $n$ dimensional Riemannian manifold. $M$ represents the
physical space-time (often taken to be 3 space and 1 time dimensions,
though we make no such restrictions here), while $N$ represents the
internal structure of the field. In applications, for nonlinear
$\sigma$-models, $N$ is usually taken to be a Lie group or a
symmetric space; in dynamics of a continuous medium, $N$ is the
material manifold\footnote{In the fluids literature, the
material manifold is sometimes only required to be equipped with a
volume form; but as every paracompact smooth manifold admits a
Riemannian metric, and a Riemannian metric can realize any volume form
through conformal rescaling, it is of no loss in generality to assume
the material manifold is Riemannian.} with $n = m$.

Denote by $\phi:M\to N$ a continuously differentiable map. In field
theories $\phi$ gives the state of the field at a point in space-time,
whereas for continuum mechanics $\phi$ represents the coordinate
transformation between the Eulerian and Lagrangian pictures.
Then the action of $\phi$ can be used to pull back the
metric $h$ onto $M$ as a positive semi-definite quadratic form on
$TM$, which we write as
\[ \phi^*h(X,Y) = h(d\phi\cdot X, d\phi\cdot Y) \]
where the left hand side is evaluated at a point $p\in M$ and the
right hand side at the point $\phi(p)\in N$ for $X,Y\in T_p M$.

We define the (1,1)-tensor field $D^\phi$ by composing with the
inverse metric $g^{-1}$:
\begin{equation}\label{eq:def:strain}
D^\phi = g^{-1}\circ \phi^*h~.
\end{equation}
We will follow Manton and Sutcliffe \cite{ManSut2004} and call
this the \emph{strain tensor} for the map, the nomenclature taken from
the study of dynamics in a continuous medium, where, roughly speaking,
the trace of $D^\phi$ on a space-like hypersurface describes the local
deformation of the material. See \cite{KijMag1992, BeiSch2003, 
Tahvil1998} for more detailed discussions (note that our
definition here agrees with that of Tahvildar-Zadeh \cite{Tahvil1998}
if we impose the physical assumption that the map $\phi$ admits a
space-like simultaneous space, but our definition differs from that of
Kijowski and Magli \cite{KijMag1992}, who in addition to the above
physical assumption, also breaks the degeneracy by adding in the 
square of the particle velocity to make the strain tensor positive
definite). 

At a fixed point $p\in M$, the tensor $D^\phi$ defines a linear
transformation of the tangent space $T_pM$. We can consider its
eigenvalues. In the case that $g$ is a Riemannian metric, $D^\phi$ is
a self-adjoint operator on $T_pM$ relative to the (positive definite)
inner-product induced by the metric, and hence all the eigenvalues
would be real. For the Lorentzian case, the eigenvalues are in general
complex. Denote by $\{\lambda_1,\ldots,\lambda_k\}$ the non-zero
eigenvalues, counted with multiplicity. One easily sees that
\begin{equation}\label{eq:kupperbound}
 k \leq \mathrm{rank}(d\phi) \leq \min(m+1,n)~. 
\end{equation}
Recall the elementary symmetric polynomials
$\sigma_j(\{\lambda_1,\ldots,\lambda_k\})$ given by
\begin{equation}
\sigma_j(\{\lambda_1,\ldots,\lambda_k\}) = \sum_{1\leq \alpha_1 < \alpha_2
< \cdots < \alpha_j \leq k} \prod_{i = 1}^{j}\lambda_{\alpha_i}
\end{equation}
with $\sigma_0 = 1$ and $\sigma_j = 0$ for all $j > k$. The $\sigma_j$ correspond to
the coefficients of the characteristic polynomial for $D^\phi$, with
$\sigma_1 = \tr (D^\phi)$ and $\sigma_{m+1} = \det (D^\phi)$, and are algebraic
invariants of the tensor field $D^\phi$. Furthermore since $D^\phi$ is
a real linear transformation, the values of the $\sigma_j(D^\phi)$ are all
real. By an abuse of notation, we will write $\sigma_j(D^\phi)$ when we
mean the symmetric polynomials on the eigenvalues of $D^\phi$.

In this paper we will only consider Lagrangian field theories for maps
$\phi: (M,g)\to (N,h)$ where the action integral is given by
\begin{equation}\label{eq:genlag}
S = \int_M L(s(\phi), \sigma_1(D^\phi), \sigma_2(D^\phi), \ldots,
\sigma_{m+1}(D^\phi)) dvol_g~, \end{equation}
where $s:N\to\mathbb{R}$ is a non-negative scalar function on the
internal space $N$. 
In particular we require the only dependence of the Lagrangian on the
field $\phi$ be through the algebraic invariants $\sigma_j(D^\phi)$ and the
value of $\phi$ itself. (Note that by \eref{eq:kupperbound} any 
$\sigma_j$ not listed above as an argument
for $L$ is not dynamical.) Physically the dependence on $\sigma_j(D^\phi)$
and not other components of $D^\phi$ corresponds to
the assumption that the laws of physics are locally invariant under a
Lorentzian rotation in $O(1,m)$ of the tangent space $T_pM$ that fixes
the kernel of $D^\phi$. In relativistic elasticity where a space-like
simultaneous space is imposed, the relevant subgroup of $O(1,m)$ to
consider is the orthogonal group $O(n)$, and this condition on the
action corresponds to the
assumption that the material is \emph{homogeneous, isotropic, and
perfectly elastic} \cite{Tahvil1998}. The function $s$ corresponds to
the \emph{entropy per particle} in relativistic elasticity, and plays
the role of the symmetry-breaking \emph{mass term} in the Skyrme model. 

Observe that since the action itself only depends on $D^\phi$, which
is defined through only the first derivative of the map $\phi$, the
equations of motion given by applying the Euler-Lagrange equations to
\eref{eq:genlag} will be quasilinear second-order partial differential
equations. 

Some explicit examples of Lagrangian field theories from the
literature that fall in this class include:
\begin{description}
\item[Wave maps] can be described by the Lagrangian function
\begin{equation}
L = \sigma_1(D^\phi) = |\nabla\phi|^2
\end{equation}
where the norm is taken relative to both $g$ and $h$. 
\item[Relativistic elasticity] typically makes the additional
assumption that $d\phi$ is onto, and its kernel time-like (the
space-like simultaneous space assumption). Then interpreting $s$ as the
entropy per particle, the action \eref{eq:genlag} is the general form
for studying a homogeneous, isotropic, and perfectly elastic solid.
The case where $L = L(s,\sigma_n)$ is the special case representing
\emph{relativistic fluids}, $\sigma_n = |\phi^*dvol_h|^2$ being equal
to the squared $g$-norm of the pullback of the volume form on $N$. 
\item[Skyrme model] classically assumes $(M,g)$ to be the
4-dimensional Minkowski space with $(N,h)$ the standard 3-sphere,
though we can also consider the model without such constraints. The
Lagrangian function is (up to rescaling) \cite{Skyrme1954, Skyrme1961}
(see also \cite{ManRub1986, Loss1987})
\begin{equation}
L = \sigma_1(D^\phi) + \sigma_2(D^\phi) + s(\phi)~, 
\end{equation}
where $\sigma_2(D^\phi) = |d\phi\wedge d\phi|^2$, and $s(\phi) = m
\mathop{dist}_h(\phi,\phi_0)$ is a multiple of the geodesic distance
of $\phi$ from a fixed point $\phi_0\in N$, the constant $m$ being the mass
parameter. In general, Lagrangians of the form $L = \sigma_1 + \ldots$
are the nonlinear $\sigma$-models. 
\item[Born-Infeld type models] fix a large constant $b >0$.
Restricting the allowed maps $\phi$ to those whose eigenvalues of
$D^\phi$ have real parts at least $-b$, the action is given by
\begin{equation}
L = \sqrt{\det (b\cdot Id + D^\phi)} - \sqrt{\det(b\cdot Id)}~.
\end{equation}
This is the $\sigma$-model analogue of the Maxwell-Born-Infeld model
of nonlinear electrodynamics, see also \cite{Gibbon2003}.
\item[Membranes] as considered in \cite{Hoppe1994, Lindbl2004} can be
described by setting $(M,g)$ to Minkowski space of some fixed
dimension, and $(N,h)$ to the real line with canonical metric, and
taking 
\begin{equation}
L = \sqrt{ 1 + \sigma_1(D^\phi)}
\end{equation} 
while restricting consideration to those $\phi$ for which $L$ is
well-defined. It can also be viewed as an analogue to the minimal
surface equation for embedding into Minkowski space. 
\end{description}

\section{Dominant energy condition}\label{sect:dec}
In this section we describe some sufficient (but far from necessary) 
conditions on the
Lagrangian function $L$ of the form considered in the previous
section that guarantees the associated Einstein-Hilbert stress-energy
tensor satisfies the dominant energy condition. In isolated cases
(fluids and elasticity with the space-like simultaneous surface
assumption, or exactly the classical Skyrme and Born-Infeld models
\cite{Gibbon2003}) the results are known before. The novel
contribution in this paper is given below in Proposition
\ref{prop:rankcondition}. It can be used in a unified algebraic 
framework applicable
to all theory of maps described by an action of the form
\eref{eq:genlag}, making unnecessary the \emph{ad hoc} computations
through explicitly evaluating the eigenvalues used in e.g.\
\cite{Gibbon2003}. It is worth remarking that those types of 
computations rely on a
genericity argument to diagonalize a positive semidefinite quadratic
form relative to a Minkowski metric, a procedure which cannot be
carried out when $D^\phi$ admits a null eigenvector. By working
geometrically and tensorially, the computations described below avoids
this difficulty.

We start by reviewing some definitions. Recall that the
Einstein-Hilbert stress-energy tensor $T\in\Gamma(T^0_2M)$ for a 
Lagrangian field theory is given by a
variational derivative for the Lagrangian \emph{density} relative to
the inverse metric, 
\begin{equation}\label{eq:def:setensor}  T\sqrt{|\det g|} :=
\frac{\delta [L \sqrt{|\det g|}]}{\delta
g^{-1}} = \left(\frac{\delta L}{\delta g^{-1}}- \frac12 L g
\right)\sqrt{|\det g|}~.
\end{equation}

\begin{defn}
The stress-energy tensor $T$ is said to obey the \emph{dominant energy
condition} at a point $p\in M$ if $\forall X\in T_pM$ such that
$g(X,X) < 0$, the following two conditions are satisfied
\numparts
\begin{eqnarray}
\label{eq:DECpastpointing} T(X,X)  > 0 \\
\label{eq:DECcausal} [ T\circ g^{-1} \circ T ](X,X) \leq 0
\end{eqnarray}
\endnumparts
unless $T$ vanishes identically.
\end{defn}
\begin{rmk}
The definition is equivalent to the classical statements (see,
e.g.~section 4.3 in \cite{HawEll1973} or chapter 9 of
\cite{Wald1984}) of the dominant energy condition.
Observe that \eref{eq:DECcausal} gives that the vector $g^{-1}\circ
T\circ X$ is a causal vector for any time-like vector $X$, and
\eref{eq:DECpastpointing} gives that the vector $g^{-1}\circ
T\circ X$ has opposite time-orientation as the time-like vector $X$.
\end{rmk}

The set of future-pointing time-like vectors form a convex
cone; hence we have the following technical lemma, applicable to all
Lagrangian field theories, not just those described in Section
\ref{sect:lag}.
\begin{lem}\label{lem:convex}
Let $F = F(x_1,\ldots,x_k)$ be a continuously differentiable function of 
$k$ real variables. Assume that $F$ is concave, $F(0) \geq 0$, and that 
$\partial_iF \geq 0$ for each $1\leq i \leq k$. 
Let $L_i$, $1\leq i \leq k$ denote a collection of Lagrangian
functions, and let $T_i$ denote their corresponding stress-energy
tensors. Suppose $T_i$ each separately obeys the dominant energy 
condition, or, equivalently,
the vectors $Y_i = g^{-1}\circ T_i\circ X$ are all past-causal for any
fixed future time-like $X$. Then the stress-energy tensor $T$ for the
Lagrangian formed by 
$L = F(L_1,\ldots,L_k)$ also obeys the dominant energy condition.
\end{lem}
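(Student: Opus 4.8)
The plan is to reduce the statement to an explicit algebraic identity expressing $T$ in terms of the $T_i$. Since $F$ depends on $g^{-1}$ only through the $L_i$, the variational derivative in \eref{eq:def:setensor} obeys the chain rule, $\delta L/\delta g^{-1} = \sum_i (\partial_i F)(L_1,\ldots,L_k)\,\delta L_i/\delta g^{-1}$ (this is the point at which the structural hypothesis $L = F(L_1,\ldots,L_k)$ is used). Substituting $\delta L_i/\delta g^{-1} = T_i + \frac12 L_i g$ and collecting the $g$-terms, I would obtain, at the point $p$,
\[ T \;=\; \sum_{i=1}^{k} (\partial_i F)\, T_i \;+\; c\,g, \qquad c \;:=\; \frac12\Big(\sum_{i=1}^{k} (\partial_i F)\,L_i \;-\; F\Big), \]
with $\partial_i F$ evaluated at $(L_1(p),\ldots,L_k(p))$.

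Next I would record the one analytic input: concavity of $F$ together with $F(0)\geq 0$ forces $c\leq 0$. Indeed, the tangent-plane inequality for a concave function, evaluated at the origin, reads $F(0) \leq F(x) - \sum_i (\partial_i F)(x)\,x_i$, so $\sum_i (\partial_i F)\,L_i \leq F - F(0) \leq F$ and hence $c \leq -\frac12 F(0) \leq 0$. Now fix a future-directed time-like $X$ and put $Y := g^{-1}\circ T\circ X$. From the displayed formula,
\[ Y \;=\; \sum_{i=1}^{k} (\partial_i F)\, Y_i \;+\; c\,X, \]
where each $Y_i = g^{-1}\circ T_i\circ X$ is past-causal by hypothesis. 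Since $\partial_i F\geq 0$, each $(\partial_i F)Y_i$ is past-causal, and since $c\leq 0$ the vector $cX$ is past-causal; because the closed past cone is a convex cone, $Y$ is past-causal. This already gives the bulk of the dominant energy condition for $T$: condition \eref{eq:DECcausal} reads $[T\circ g^{-1}\circ T](X,X) = g(Y,Y)\leq 0$ since $Y$ is causal, and $T(X,X) = g(Y,X)\geq 0$.

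It remains only to handle the strictness clause in \eref{eq:DECpastpointing}. If $Y\neq 0$ then $g(Y,X) > 0$, since a nonzero past-causal vector pairs strictly positively with a future time-like one. Suppose instead $Y = 0$ for some time-like $X$. The closed past cone is salient, so a sum of past-causal vectors vanishes only if every summand does; hence $c = 0$ and $(\partial_i F)\,Y_i = 0$ for each $i$. For each $i$, either $\partial_i F = 0$, or $Y_i = 0$, i.e.\ $T_i(X,\cdot) = 0$, so $T_i(X,X) = 0$ and the dominant energy condition for $T_i$ then forces $T_i = 0$ at $p$; in either case $(\partial_i F)\,T_i = 0$. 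Therefore $T = \sum_i (\partial_i F)T_i + cg = 0$ at $p$, and the dominant energy condition holds for $T$. The only genuinely non-bookkeeping ingredients are the inequality $c\leq 0$ (where concavity and $F(0)\geq 0$ enter) and this last degeneracy analysis (where salience of the time cone and the hypothesis on the individual $T_i$ are both used); I do not anticipate a serious obstacle, only the need to keep careful track of time-orientations and of the cases where some $L_i$ or $Y_i$ vanish.
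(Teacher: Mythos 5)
Your proof is correct and follows essentially the same route as the paper: the chain rule for the variational derivative gives $T=\sum_i(\partial_iF)T_i-\frac12\bigl(F-\sum_i(\partial_iF)L_i\bigr)g$, the sign of the $g$-coefficient is controlled by concavity together with $F(0)\geq 0$, and convexity of the past causal cone finishes the argument. Your extra paragraph handling the strictness clause in \eref{eq:DECpastpointing} (via salience of the cone) is a careful addition that the paper's proof leaves implicit, but it does not change the approach.
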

\begin{proof}
The stress-energy tensor $T$ can be written, using
\eref{eq:def:setensor}, as
\[ T = \sum_{i = 1}^{k} \partial_i F \cdot \frac{\delta
L_i}{\delta g^{-1}} - \frac12 F g = \sum_{i=1}^{k}
\partial_i F \cdot T_i - \frac12 (F - \sum_{i=1}^{k}\partial_i
F\cdot L_i) g~.\]
Now considering $g^{-1}\circ T\circ X$, the first term in the above
expression contributes $\sum \partial_i F \cdot Y_i$. By assumption, 
this is a convex combination of past-causal
vectors, and hence is past-causal. For the second term, since $g^{-1}\circ g\circ X = X$, to
show that it is also past-causal it suffices to show that
\[ F \geq \sum_{i=1}^{k}\partial_i F\cdot L_i~. \]
But this follows from the fact that $F$ is concave and $F(0) \geq 0$.
\end{proof}
\begin{rmk}\label{rmk:convex}
That $F$ is required
to have nonnegative partial derivatives represents the fact that each of
the $L_i$'s contribute nonnegatively to the energy. That $F(0) \geq 0$
states that there is no negative vacuum energy. Both conditions are
therefore natural and necessary for the total Lagrangian $L$ to have
positive energy density, if the $L_i$'s are taken to be independent. 
The concavity condition is technical. It
appears naturally in the proof, but can potentially be relaxed if 
more is assumed on the individual $L_i$'s. 
\end{rmk}

We will apply Lemma \ref{lem:convex} to the following proposition,
which is the main computational result of this section. Observe also
that for $L = s(\phi)$, its corresponding stress-energy tensor is $T =
-\frac12 s\cdot g$, and by the assumption on the positivity of $s$,
obeys the dominant energy condition. 
\begin{prop}\label{prop:rankcondition}
For $L = \sigma_j(D^\phi)$, $T$ obeys the dominant energy condition. 
Furthermore, $T = 0$ at a point $p$ if and only if $j > \mathrm{rank}(d\phi |_p)$.
\end{prop}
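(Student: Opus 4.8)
The plan is to compute $\delta L/\delta g^{-1}$ explicitly for $L = \sigma_j(D^\phi)$ and then exploit the pointwise algebraic structure. Since $D^\phi = g^{-1}\circ\phi^*h$, varying $g^{-1}$ while holding $\phi$ (hence $\phi^*h$) fixed, one gets that the variational derivative of $\sigma_j(D^\phi)$ is, up to lower-order symmetric-polynomial factors, the contraction of $\phi^*h$ against the $(j-1)$-st Newton tensor (the tensor $P_{j-1}(D^\phi)$ built from $\sigma_0,\dots,\sigma_{j-1}$ and powers of $D^\phi$), since $\partial\sigma_j/\partial D = P_{j-1}(D)$ in the usual matrix calculus of symmetric functions. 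Thus $T$ takes the schematic form $T = \phi^*h\circ P_{j-1}(D^\phi) - \tfrac12\sigma_j(D^\phi)\,g$ (with appropriate index placement), and the operator $g^{-1}\circ T$ acting on $T_pM$ is $D^\phi\circ P_{j-1}(D^\phi) - \tfrac12\sigma_j(D^\phi)\,\mathrm{Id}$. The virtue of this representation is that it is manifestly polynomial in $D^\phi$, so everything can be diagonalised (or put in Jordan form) over $\mathbb{C}$ in terms of the eigenvalues $\lambda_1,\dots,\lambda_k$, \emph{without} ever needing the genericity/diagonalisation-relative-to-Minkowski argument that fails at null eigenvectors.

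Next I would verify the two dominant energy inequalities. Fix a future timelike $X$ with $g(X,X)<0$. For \eref{eq:DECpastpointing}, writing $T(X,X)$ in terms of $\phi^*h$ and the spectral data of $D^\phi$: the key point is that $\phi^*h$ is positive semidefinite, so the ``$\phi^*h\circ P_{j-1}$'' contribution is controlled by how $P_{j-1}(D^\phi)$ acts on the image of $d\phi$, while $-\tfrac12\sigma_j g$ contributes $+\tfrac12\sigma_j\,(-g(X,X))$ with a favourable sign because $\sigma_j\ge 0$ on a positive-semidefinite strain (this is where $\sigma_j(D^\phi)\ge 0$ must be recorded — the nonzero eigenvalues of $g^{-1}\phi^*h$ restricted to a $\phi^*h$-positive subspace are nonnegative). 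For \eref{eq:DECcausal}, I would compute $[T\circ g^{-1}\circ T](X,X) = g\big((g^{-1}T)^2 X, X\big)$ using the polynomial form of $g^{-1}T$ above; the claim reduces to an eigenvalue inequality among the $\lambda_i\ge 0$ that should follow from Newton's inequalities for elementary symmetric polynomials together with the causal character of $X$. Alternatively — and this is cleaner — I would invoke Lemma~\ref{lem:convex}-style reasoning only as a sanity check and instead directly show $Y = g^{-1}\circ T\circ X$ is past-causal by decomposing $T_pM$ into the kernel of $D^\phi$ and a complement on which $\phi^*h$ is definite, reducing to the Riemannian-like computation on the complement.

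The $T=0$ dichotomy is the combinatorial heart. If $j > \mathrm{rank}(d\phi|_p)$, then by \eref{eq:kupperbound} we have $k < j$, so $\sigma_j(D^\phi)=0$ \emph{and} every monomial in $P_{j-1}(D^\phi)\circ\phi^*h$ of the relevant degree vanishes — more precisely, $D^\phi$ annihilates any vector not in $\mathrm{image}(d\phi$'s adjoint$)$, and on the at-most-$k$-dimensional ``active'' subspace a degree-$j$ object in the eigenvalues is forced to zero since $\sigma_\ell = 0$ for $\ell > k$; one checks $T$ vanishes on all of $T_pM\times T_pM$ term by term. Conversely, if $j \le \mathrm{rank}(d\phi|_p)$, I must produce a vector pair on which $T\ne 0$: take $X$ timelike and use that $T(X,X)>0$ strictly unless $T\equiv 0$ — but here I have to be careful, because the dominant energy condition's strict inequality \eref{eq:DECpastpointing} is \emph{conditional} on $T\not\equiv 0$, so I cannot use it circularly. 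Instead I would exhibit an eigenvector (or a pair of vectors spanning a $\phi^*h$-positive plane) realising a nonzero $\sigma_j$-type contribution: when $j\le k$, $\sigma_j$ of the nonzero eigenvalues is a sum of positive products (the $\lambda_i$ on the active subspace are positive), hence strictly positive, forcing $\mathrm{tr}(T)\ne 0$ or a nonvanishing diagonal component. \textbf{The main obstacle} I anticipate is precisely this converse direction together with the case analysis around null eigenvectors of $D^\phi$: one must show that even when $D^\phi$ is non-diagonalisable (a Jordan block with eigenvalue $0$, corresponding to a null direction in $\mathrm{image}(d\phi)$), the count ``$\sigma_j\ne 0 \iff j\le k$ (nonzero eigenvalues)'' still correctly detects non-vanishing of the full tensor $T$, not merely of its trace $\sigma_j$. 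Handling this cleanly is what forces the geometric/tensorial computation advertised before the proposition rather than a naive eigenvalue substitution.
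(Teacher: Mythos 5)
There is a genuine gap, and it sits at the heart of your positivity argument. You assert that $\sigma_j(D^\phi)\ge 0$ because ``the nonzero eigenvalues of $g^{-1}\phi^*h$ restricted to a $\phi^*h$-positive subspace are nonnegative.'' This is false in Lorentzian signature. If $v$ is a $g$-timelike eigenvector of $D^\phi$ with eigenvalue $\lambda$, then $\phi^*h(v,v)=\lambda\, g(v,v)$, and since $\phi^*h\ge 0$ while $g(v,v)<0$ you get $\lambda\le 0$: the timelike eigendirection always carries a \emph{non-positive} eigenvalue. Concretely, $\sigma_1(D^\phi)=|\nabla\phi|^2$ is strictly negative for a wave map depending only on $t$, and in $1+1$ dimensions one checks $\sigma_2=\det(D^\phi)\le 0$ for every positive semi-definite $\phi^*h$. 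Consequently the term $-\tfrac12\sigma_j\,g(X,X)$ does \emph{not} have a favourable sign, and $T(X,X)\ge 0$ cannot be established term by term from your decomposition $T=\phi^*h\circ P_{j-1}(D^\phi)-\tfrac12\sigma_j g$; the two contributions individually have indefinite sign and the non-negativity comes only from a cancellation between them. The same false premise undercuts your treatment of \eref{eq:DECcausal}, which you reduce to ``an eigenvalue inequality among the $\lambda_i\ge 0$ that should follow from Newton's inequalities'' --- the hypothesis of Newton's inequalities is not available, and ``should follow'' is not carried out. Finally, you correctly identify the converse of the rank dichotomy (and the non-diagonalisable/null cases) as the main obstacle, but you do not resolve it; falling back on spectral data of $D^\phi$ is exactly what fails there.

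The fix is the route the paper takes, which you gesture at but do not use: represent $\sigma_j$ not through eigenvalues but through the induced operator on $\Lambda^j(T_pM)$, as in \eref{eq:lagrangiantensorly}. Splitting a $g$-orthonormal basis of $\Lambda^j(T_pM)$ into the $j$-vectors containing $e_0=X$ and those not containing it, the two terms of \eref{eq:setensortensorly} recombine so that $T(X,X)$ equals one half the sum of $\phi^*(h^{\sharp j})(\omega,\omega)$ over the \emph{entire} basis --- manifestly non-negative since $\phi^*(h^{\sharp j})$ is positive semi-definite, and zero precisely when $\Lambda^j(\phi_*T_pM)=\{0\}$, i.e.\ when $j>\mathrm{rank}(d\phi|_p)$. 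This single computation delivers both \eref{eq:DECpastpointing} and the rank dichotomy without any spectral decomposition of $D^\phi$, and \eref{eq:DECcausal} then follows from a Cauchy--Schwarz estimate on the cross terms $T(X,e_i)=\sum_\eta\phi^*(h^{\sharp j})(e_0\wedge\eta,e_i\wedge\eta)$. Your Newton-tensor formula for $\delta L/\delta g^{-1}$ is correct as algebra, but as written your proposal does not prove either DEC inequality nor the ``only if'' half of the vanishing statement.
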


Before giving the proof, we need to review some linear algebra.
Consider a real vector space $V$. Let $A$ be a linear transformation
on
$V$. Then $A$ naturally extends to a linear transformation, which we
denote $A^{\sharp j}$, on
$\Lambda^j(V)$, the space of alternating $j$-vectors over $V$. 
A classical result in linear algebra is that $\sigma_j(A)$ is 
proportional to $\tr_{\Lambda^j(V)}A^{\sharp j}$. Now, letting $V = T_pM$ and
$A = D^\phi = g^{-1}\circ \phi^*h$, we observe that
\[ (D^\phi)^{\sharp j} = (g^{-1})^{\sharp j} \circ \phi^* (h^{\sharp
j})~, \]
or, to put it in words, $(D^\phi)^{\sharp j}$ is obtained from first
taking the induced metric $h^{\sharp j}$ on alternating $j$-vectors in
$T_{\phi(p)}N$, pulling it back via $\phi$, and composing it with the
induced metric $(g^{-1})^{\sharp j}$ for the alternating $j$-forms. In
index notation, this can be written as
\[
[(D^\phi)^{\sharp j}]^{b_1\ldots b_j}_{a_1\ldots a_j} =
g^{b_1c_1}\cdots g^{b_jc_j} (\phi^*h)_{a_1 [c_1|} 
(\phi^*h)_{a_2 |c_2|}\cdots (\phi^*h)_{a_{j-1} |c_{j-1}|}
(\phi^*h)_{a_j |c_j]}
\]
where the bracket notation in the indices denotes full
anti-symmetrization of the $\{c_1,\ldots,c_j\}$ indices. For a
Lagrangian proportional to $\sigma_j$,  we can assume
\begin{equation}\label{eq:lagrangiantensorly}
L = [(D^\phi)^{\sharp j}]^{b_1\ldots b_j}_{a_1\ldots a_j} =
g^{a_1[c_1|}\cdots g^{a_j|c_j]} 
(\phi^*h)_{a_1 c_1} \cdots (\phi^*h)_{a_jc_j}~.
\end{equation}
It is simple to check, using $(D^\phi) = \mathop{diag}(-1,1,1,\ldots)$
that the above expression has the correct sign: that $L$ defined thus
is a positive multiple of $\sigma_j$.

One can also arrive at \eref{eq:lagrangiantensorly} purely from a
linear algebra point of view. Let $p_j$ be the power sum
\[ p_j(\{\lambda_1,\ldots,\lambda_k\}) = \sum_{i=1}^{k} \lambda_i^j~.
\]
Recall Newton's identity
\[ j\cdot \sigma_j = \sum_{i=1}^j (-1)^{i-1}\sigma_{j-i}p_i \]
which allows us to express $\sigma_j$ as a rational polynomial in $p_i$'s.
Now, by definition, it is clear that
\[ p_j(D^\phi) = \tr [(D^\phi)^j] \]
where $(D^\phi)^j$ is the $j$-fold composition of $D^\phi$. It is easy
to check then, for some $E$
\[ \sigma_j = g^{a_1b_1}\cdots g^{a_jb_j} E_{b_1\ldots b_j}^{c_1\ldots c_j}
(\phi^*h)_{a_1c_1}\cdots (\phi^*h)_{a_jc_j}~. \]
Newton's identity reduces to a generating condition for $E$ based on
the Kronecker $\delta$ symbols,
\begin{eqnarray*}
E_b^c = \delta_b^c~,\\
j E_{b_1\ldots b_j}^{c_1\ldots c_j} = \sum_{i = 1}^j
(-1)^{i-1}E_{b_1\ldots b_{j-i}}^{c_1\ldots c_{j-i}}
\delta_{b_{j-i+1}}^{c_{j-i}}\delta_{b_{j-i+2}}^{c_{j-i+1}}\cdots\delta_{b_j}^{c_{j-i+1}}~.
\end{eqnarray*}
A direct computation which we omit here shows that then in fact the
invariant $E_{b_1\ldots b_j}^{c_1\ldots c_j}$ is a positive rational
multiple of the generalized Kronecker symbol
$\delta_{b_1\ldots b_k}^{c_1\ldots c_j}$, from which we recover
\eref{eq:lagrangiantensorly}.

\begin{proof}[Proof of Proposition \ref{prop:rankcondition}]
We need to show that $g^{-1}\circ
T \circ X$ is past causal for any future time-like vector $X$. Since
$T$ is tensorial, we can assume $X$ has unit length. Using the
expression for (a positive scalar multiple of) $\sigma_j$ given in
\eref{eq:lagrangiantensorly}, we can write $T(X,\cdot)$ for $L =
\sigma_j$ in index notation:
\begin{equation}\label{eq:setensortensorly}
T_{ab}X^b = j X^{[b|}g^{a_2|c_2|}\cdots g^{a_j|c_j]}(\phi^*h)_{a
b} \cdots (\phi^*h)_{a_jc_j} - \frac12 g_{ab}X^b L
\end{equation}

Take an orthonormal basis for $T_pM$ relative to $g$. Since we
assumed $X$ unit, let $e_0 = X$ and $\{e_i\}_{1\leq i\leq m}$ all
space-like. We can take $j \leq m+1$ as otherwise $T$ is identically
0. Then we notice that a basis for $\Lambda^j(T_pM)$ is given by
\[ \{ e_0 \wedge e_{\alpha_1}\wedge \cdots \wedge e_{\alpha_{j-1}}
\}_{1\leq \alpha_1 < \cdots < \alpha_{j-1}\leq m} \cup \{
e_{\alpha_1}\wedge \cdots \wedge e_{\alpha_j}\}_{1\leq \alpha_1 <
\cdots < \alpha_j\leq m}~. \]
We write the first set as $\Lambda^j_\perp$ and the second set as
$\Lambda_\parallel^j$. Using the normalization that $v\wedge w =
v\otimes w - w\otimes v$, we find that each of the element in
$\Lambda^j_\perp$ has norm $-j!$ while the elements in
$\Lambda^j_\parallel$ has norm $j!$.

To show that $T(X,X) > 0$, we observe that under the
expansion \eref{eq:setensortensorly}, the first term corresponds to
\[ \sum_{\omega\in \Lambda_\perp^j} \phi^* ( h^{\sharp j}
)(\omega,\omega)~, \]
while the second term corresponds to
\[ \frac{1}{2} \left( - \sum_{\omega\in\Lambda_\perp^j} \phi^* (
h^{\sharp j})(\omega,\omega)  +
\sum_{\omega\in\Lambda_\parallel^j}
\phi^* (h^{\sharp j})(\omega,\omega)\right)~. \]
So summing them gives
\[ \frac{1}{2} \left( \sum_{\omega\in\Lambda_\perp^j} \phi^* (
h^{\sharp j})(\omega,\omega) +
\sum_{\omega\in\Lambda_\parallel^j}
\phi^* (h^{\sharp j})(\omega,\omega)\right) \]
which is non-negative by the fact that $\phi^* (h^{\sharp j})$ is a
positive semi-definite quadratic form on $\Lambda^j(T_pM)$.
Furthermore, observe that since
$\Lambda^j_\parallel\cup\Lambda^j_\perp$ is a basis, its push-forward
$\phi_*\Lambda^j_\parallel\cup \phi_*\Lambda^j_\perp$ spans
$\Lambda^j(\phi_* T_pM) \subset \Lambda^j(T_{\phi(p)}N)$. Thus by the
fact that $h$ (and hence the induced metric $h^{\sharp j}$) is
positive definite, we conclude that when $T(X,X) = 0$, necessarily 
$\Lambda^j(\phi_* T_pM) = \{ 0\}$. This proves the assertion 
that $T$ vanishes only when $j > \mathrm{rank}(d\phi)$.

To show \eref{eq:DECcausal}, we observe that
\[
X^aT_{ac}g^{cd}T_{db}X^b = - T(X,X)^2 + \sum_{i = 1}^m T(X,e_i)^2~.
\]
The first thing to note is that $T(X,e_i)$ does not have any 
contribution from the second term in \eref{eq:setensortensorly}. For
the first term, a quick computation shows that $T(X,ei)$ corresponds
to 
\[ \sum_{\eta \in \Lambda^{j-1}_\parallel} \phi^*(h^{\sharp
j})(e_0\wedge \eta,
e_i\wedge\eta) \]
so 
\begin{eqnarray*}
\fl |\sum_{i=1}^m T(X,e_i)^2 | & \leq& ( \sum |T(X,e_i)| )^2 \\
& \leq &( \sum_{i = 1}^m \sum_{\eta\in \Lambda^{j-1}_\parallel,
e_i\wedge \eta \neq 0}
|\phi^*(h^{\sharp j})(e_0\wedge \eta, e_i\wedge\eta)|)^2 \\
& \leq& \frac14(\sum_{\eta\in \Lambda^{j-1}_\parallel}\phi^*(h^{\sharp
j})(e_0\wedge \eta,e_0\wedge \eta) + \sum_{i=1}^m\phi^*(h^{\sharp
j})(e_i\wedge \eta,e_i\wedge \eta))^2 \\
& = &\frac14(\sum_{\eta\in
\Lambda^{j-1}_\parallel}\sum_{i=0}^m\phi^*(h^{\sharp
j})(e_i\wedge \eta,e_i\wedge \eta))^2 \\
& = &T(X,X)^2
\end{eqnarray*}
and therefore \eref{eq:DECcausal} is satisfied.
\end{proof}

As an immediate application of Lemma \ref{lem:convex} and Proposition
\ref{prop:rankcondition}, we have that the Skyrme model and the
Born-Infeld model described in Section \ref{sect:lag} obey the
dominant energy condition: it suffices to check that
$F_{\mathrm{Skyrme}}(s,\sigma_1,\sigma_2) = s + \sigma_1 + \sigma_2$
and $F_{\mathrm{BI}}(\sigma_1, \ldots, \sigma_{m+1}) =
\sqrt{\sum_1^{m+1} b^{m+1-j} \sigma_j} - \sqrt{b^{m+1}}$ are concave,
satisfies $F(0) \geq 0$, and has positive partial derivatives,
conditions which are easily seen to hold. Therefore we recover the
result of \cite{Gibbon2003} in dimension $m = n = 3$, and also extend 
it to arbitrary dimensions $m,n$. In fact, we have

\begin{thm}\label{thm:lagsuffdec}
Given a Lagrangian theory of maps with action given by
\eref{eq:genlag}, a sufficient condition for its 
Einstein-Hilbert stress-energy tensor to obey the dominant energy
condition is that the Lagrangian function $L$ in \eref{eq:genlag} 
be continuously differentiable with nonnegative partial derivatives on
its arguments, be concave, and satisfy $L(0) \geq 0$. 
\end{thm}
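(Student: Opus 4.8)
The plan is to exhibit the general Lagrangian of \eref{eq:genlag} as a concave, monotone, suitably normalised combination of Lagrangians whose stress-energy tensors are already known to satisfy the dominant energy condition, and then simply to quote Lemma \ref{lem:convex}.

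Concretely, I would take as building blocks the $m+2$ Lagrangian functions $L_0 := s(\phi)$ and $L_j := \sigma_j(D^\phi)$ for $1\le j\le m+1$. By \eref{eq:kupperbound} no $\sigma_j$ with $j>m+1$ is dynamical, so the Lagrangian in \eref{eq:genlag} is literally $L = F(L_0,L_1,\ldots,L_{m+1})$, where $F$ denotes the function $L$ of \eref{eq:genlag} regarded as a function of its $m+2$ real arguments (on its natural, convex domain of definition). The stress-energy tensor of $L_0=s(\phi)$ is $T_0 = -\frac12 s\,g$; since $s\ge 0$ this obeys the dominant energy condition, or equivalently $Y_0 = g^{-1}\circ T_0\circ X = -\frac12 s\,X$ is past-causal — a nonnegative multiple of the past time-like vector $-X$, or the zero vector — for every future time-like $X$. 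The stress-energy tensor of $L_j = \sigma_j(D^\phi)$ obeys the dominant energy condition by Proposition \ref{prop:rankcondition}. Hence every building block satisfies the hypothesis of Lemma \ref{lem:convex}.

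It then remains only to observe that $F$ satisfies the hypotheses required of the outer function in Lemma \ref{lem:convex}: it is continuously differentiable, has nonnegative partial derivatives on each argument, is concave, and satisfies $F(0)\ge 0$ — but these are exactly the assumptions placed on $L$ in the statement of the theorem. Invoking Lemma \ref{lem:convex} with this $F$ and the collection $\{L_0,\ldots,L_{m+1}\}$ shows that the stress-energy tensor of $L = F(L_0,\ldots,L_{m+1})$ obeys the dominant energy condition, which is the assertion. (The cases of the Skyrme and Born-Infeld models mentioned just above are then the special cases obtained by checking these four properties for the explicit $F_{\mathrm{Skyrme}}$ and $F_{\mathrm{BI}}$.)

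I do not expect a genuine obstacle: the theorem is essentially a repackaging of Proposition \ref{prop:rankcondition} (positivity of each $\sigma_j$-block) together with Lemma \ref{lem:convex} (stability of the dominant energy condition under concave, monotone, normalised combinations), so all the analytic work has already been done. The only points deserving a line of care are bookkeeping — confirming that only $\sigma_1,\ldots,\sigma_{m+1}$ enter $L$, so that $F$ has finitely many arguments and Lemma \ref{lem:convex} applies verbatim — and the handling of the degenerate locus, where one uses the convention that a vanishing stress-energy tensor (for instance $T_j = 0$ when $j>\mathrm{rank}(d\phi|_p)$, or $T_0 = 0$ where $s$ vanishes) trivially satisfies the dominant energy condition, so that the reformulation in terms of past-causal vectors $Y_i$ used inside the proof of Lemma \ref{lem:convex} remains valid at such points.
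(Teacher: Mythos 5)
Your proposal is correct and is essentially the paper's own argument: the theorem is stated there as an immediate consequence of Lemma \ref{lem:convex} applied to the building blocks $s(\phi)$ (whose stress-energy is $-\frac12 s\,g$, noted just before Proposition \ref{prop:rankcondition}) and $\sigma_j(D^\phi)$ for $1\le j\le m+1$ (handled by Proposition \ref{prop:rankcondition}). Your added remarks on the finite number of dynamical $\sigma_j$'s and on the degenerate locus where some $T_i$ vanishes are correct points of bookkeeping that the paper leaves implicit.
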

\begin{rmk}
We stress again here that our notion of Lagrangian theory of maps, as
discussed in Section \ref{sect:lag}, is very general, and includes the
general forms of Lagrangians used in relativistic elasticity and
relativistic fluids. Therefore it is remarkable that for these theory
of maps, the assumption of a time-like particle world-line is
inconsequential insofar as the dominant energy condition is concerned:
in particular a tachyonic fluid will still obey the dominant energy
condition, in direct contradiction to the intuition often presented as
the justification for the dominant energy condition. For a concrete
example, consider the fluid Lagrangian with some large constant $b$ on
Minkowski space,
\[ L = \sqrt{b + \sigma_3(D^\phi)}~. \]
By Theorem \ref{thm:lagsuffdec} it obeys the dominant energy
condition; locally, however, the ``fluid'' moving with a constant
``velocity'' larger than that of gravity is a solution, e.g.\
$\phi:\mathbb{R}^{1+3}\to\mathbb{R}^3$ where $\phi(t,x,y,z) = (t,x,y)$
is a solution to the equations of motion. 
As we will see in Section \ref{sect:ex} for the Skyrme
model, however, there is an instability (i.e.\ a violation of
hyperbolicity) associated to these tachyonic matter. 
\end{rmk}

\section{Regular hyperbolicity and the canonical
stress}\label{sect:rh}
The question of hyperbolicity for a system of partial differential
equations is generally synonymous with whether the system admits a
locally well-posed Cauchy problem for \emph{all} smooth initial data. 
In the context of linear systems of
constant coefficients, it is a well-known theorem \cite[Ch.\ 5]{Horman1976} 
that a necessary and
sufficient condition for hyperbolicity of a system $P(\partial)\phi = 0$ 
is the hyperbolicity for its polynomial \emph{symbol} $P(\xi)$. 
(That the Cauchy problem is well-posed for \emph{all} smooth initial
data is crucial for the necessity. See Remark \ref{rmk:ultrahyper}
below.) Two difficulties arise when trying to apply this theorem to
non-linear or variable coefficient systems. First is the general
difficulty of checking the hyperbolicity for a symbol, which requires
computing the singular locus of $P(\xi)^{-1}$. Second is the technical
difficulty that concrete quantitative estimates are required for
iteration arguments used in these type of problems. 

Many stronger notions of hyperbolicity exist in the mathematical
literature (see \cite{CouHil1962, Lax2006} for some samplers); the
common theme to all is that they provide \emph{sufficient} conditions
for local well-posedness of the non-characteristic Cauchy problem. In
general, however, these conditions are not \emph{necessary}: that
Leray hyperbolicity \cite{Leray1952} is a subset of symmetric
hyperbolicity \cite{Friedr1954} is well-known; that the
Maxwell-Born-Infeld model can only be seen as symmetric hyperbolic
after an augmentation procedure \cite{Serre2004,Brenie2004} strongly
suggests that the notion of symmetric hyperbolicity does not directly
capture all hyperbolic systems.
(It is perhaps interesting to note that the hyperbolicity of 
Maxwell-Born-Infeld system can be directly treated within the regular
hyperbolicity framework \cite{Speck-p2010}.) In this section we will
describe the regular hyperbolicity framework, which provides another
sufficient condition for local well-posedness\footnote{That this
framework is not better known in the community is perhaps due to the
dense mathematical language in Christodoulou's monograph \cite{Christ2000}. 
The author hopes to provide here, by way of a summary of
the main ideas behind the theory, an advertisement for this simple yet
powerful technique that is well-adapted for use in Lagrangian field
theories. In a forthcoming paper \cite{SpeWon-pPP}, Jared Speck and
the author will give a more accurate and detailed introduction to
regular hyperbolicity and fill in some material omitted in
\cite{Christ2000}.}. To guarantee local
well-posedness, the general technique common to these methods is that
of the $L^2$ energy estimate.

\subsection{Energy estimates}
The energy estimates that we will need are estimates of the following
form: assume for the current exposition that $M$ is diffeomorphic 
to $\mathbb{R}^{1+m}$, and let
$H^k$ denote the $L^2$ Sobolev space with $k$ derivatives on
constant time slices diffeomorphic to $\mathbb{R}^m$. The energy
estimates are inequalities 
\begin{equation}\label{eq:sobolev}
\| \phi \|_{H^k}(t) \leq \|\phi\|_{H^k}(0) e^{Ct} \quad \forall 0 < t
< T, C = C(T, \|\phi\|_{H^k}(0))~. 
\end{equation}
(That we are only interested in such $L^2$ energy estimates is because
we look for estimates generally applicable to hyperbolic equations. In
particular, we require the estimates to hold for the linear wave
equation. By a theorem of Littman \cite{Littma1963}, inequalities of
the form \eref{eq:sobolev} for the wave equation can only hold in $L^2$ 
based function spaces.) Inequalities of the form \eref{eq:sobolev}
guarantee local existence up to time $T$. 

To obtain such energy estimates, the usual method is via the divergence
theorem. Let $J = J(t,x,\phi,\ldots,\partial^k\phi)$ be a vector
field, depending on up to $k$ derivatives of $\phi$ which we assume
solves some partial differential equation.
The divergence theorem states that, for $n = \partial_t$ the
normal vector field to the slices
\[ -\int J(0,x)\cdot n dx + \int J(t,x)\cdot n dx =
\int_0^t\int \mathop{div}(J) dx dt~. \]
The inequality \eref{eq:sobolev} would hold after one application of
Gronwall's lemma, if we can guarantee the following
\numparts
\begin{eqnarray}
\label{eq:positivity}C^{-1} \|\phi\|_{H^k}(t) & \leq \int J(t,x)\cdot n dx \leq
& C\|\phi\|_{H^k}(t) \quad \mbox{ for some } C \geq 1~,\\
\label{eq:noetherineq}\int\mathop{div}(J)(t) dx & \leq C' \|\phi\|_{H^k}(t) & \mbox{ for
some } C' > 0~.
\end{eqnarray}
\endnumparts
A $J$ verifying \eref{eq:positivity} and  \eref{eq:noetherineq} will
be called a \emph{compatible energy current}. (In \cite{Christ2000},
Christodoulou use the term ``compatible current'' to refer to any $J$
verifying \eref{eq:noetherineq}. The use of the word ``energy'' here is
meant to reflect the imposition of the additional positivity condition
\eref{eq:positivity}.)

For \eref{eq:noetherineq} to hold, it is necessary that we apply the
equation: by the chain rule, the divergence of the vector field $J$
which depends on $k$ derivatives of $\phi$ will depend on $k+1$
derivatives of $\phi$; it will be absurd to be able to control its
integral by something depending on fewer derivatives. But by suitably 
applying the equations of motion for $\phi$, we can
convert top-order derivatives to lower-order ones and satisfy the
inequality. 

\subsection{Regular hyperbolicity}
Christodoulou provided in his monograph \cite{Christ2000} a robust and
geometric way for obtaining compatible energy currents for any given 
Lagrangian theory of maps, using techniques similar to those used by
Hughes, Kato and Marsden in a non-geometric context \cite{HuKaMa1976}.
In this subsection, we quickly review the key points of the theory.
The main results obtained by Christodoulou are that
\begin{enumerate}
\item Vector fields satisfying \eref{eq:noetherineq}, up to a total
divergence term, and up to Noether currents coming from symmetries of
the target manifold, \emph{generically} arise from contractions of 
arbitrary vector
fields against what is called the \emph{canonical stress} tensor,
which can be obtained algorithmically from the equation of motion. 
\item The existence of vector fields satisfying \eref{eq:positivity}
depends only on the causal properties of the canonical stress tensor. 
\end{enumerate}

Let us begin by summarising the construction of the canonical stress
tensor. Let $\psi$ be a function taking values in $\mathbb{R}^n$. We
will use index notation where components of $\mathbb{R}^n$ are
indicated by capital Latin letters. Assume $\psi$ solves a system of
second order partial differential equations of the form
\begin{equation}\label{eq:modelequation}
m^{ab}_{AB}\partial^2_{ab} \psi^B = F_A(\psi, \partial\psi)
\end{equation}
where lowercase Latin letters denote indices on the space-time
manifold, and $m^{ab}_{AB}$ is some field of coefficients, symmetric
in $a,b$, and symmetric in $A,B$. Consider the tensor field defined by
\begin{equation}
Z^{ab}_{AB}|^c_d := m^{ab}_{AB}\delta^c_d - m^{cb}_{AB}\delta^a_d -
m^{ac}_{AB}\delta^b_d~,
\end{equation}
it is a direct computation to show that the tensor field
\begin{equation}
Q[\psi]^c_d := - Z^{ab}_{AB}|^c_d \partial_a\psi^A \partial_b\psi^B 
\end{equation}
has the property
\begin{equation}
\partial_cQ[\psi]^c_d = - (\partial_c
Z^{ab}_{AB}|^c_d)\partial_a\psi^A\partial_b\psi^B +
2(\partial_d\phi^B)F_B(\psi,\partial\psi)~.
\end{equation}
Therefore, for any vector field $X$, the vector field $Q^c_dX^d$ has
the property that it depends on 1 derivative of $\psi$, and so does
its divergence. This tensor $Q^c_d[\psi]$ is the \emph{canonical stress
tensor} associated to the solution $\psi$ of the equation 
\eref{eq:modelequation}. Notice it
only depends on $\partial\psi$ and the coefficients for the highest 
order derivative term, $m^{ab}_{AB}$

In general, for a Lagrangian theory of maps described in Section
\ref{sect:lag}, the Euler-Lagrange equation will take the quasilinear
form
\[ m^{ab}_{AB}(x,\phi,\partial\phi)\nabla^2_{ab}\phi^B =
F_A(x,\phi,\partial\phi)~.\]
By commuting in further derivatives, we see that the partial
derivatives $\psi^A = \nabla^\alpha\phi^A$ all satisfy equations of
the form \eref{eq:modelequation} \emph{with the same $m^{ab}_{AB}$}.
Therefore, with judicious applications of Sobolev embedding theorem,
we see that for any fixed $X$, the vector field 
\[ J^c := X^c|\phi|^2 + \sum_{|\alpha| \leq k}
Q[\nabla^\alpha\phi]^c_dX^d \]
satisfies \eref{eq:noetherineq}. Note that thus far the construction
of $J$ is an algebraic and algorithmic statement about Lagrangian
theory of maps. 

To obtain hyperbolicity, it is necessary to also satisfy
\eref{eq:positivity}. It was shown by Christodoulou that given a
foliation $\Sigma_t$ of $M$ by level sets of a function $t$, that a
sufficient condition for $J$ (defined via the vector field $X$) to
satisfy \eref{eq:positivity} is for the following to
hold
\begin{enumerate}
\item $m^{ab}_{AB}(dt)_a(dt)_b$ is a negative definite matrix;
\item $m^{ab}_{AB}\xi_a\xi_b$ is a positive definite matrix for all
non-zero one-forms satisfying $\xi(X) = 0$. 
\end{enumerate}
(Please refer to \cite{Christ2000} for proof.)
We shall call functions $t$ that satisfy the first property \emph{time
functions} and vector fields $X$ that satisfy the second property
\emph{observer fields}. For semilinear equations where $m^{ab}_{AB} =
g^{ab}h_{AB}$, these notions agree with the that of time
functions and timelike observer fields in general relativity. In the
quasilinear case, the time functions and observer fields form a
replacement for the usual causal structure of the Lorentzian metric
for governing the kinematics in perturbative analysis of solutions.
One can also analogously define the notion of global hyperbolicity,
domain of dependence, and maximal development of initial data relative
to this replacement causal structure. (This formulation is, in
particular, used in Christodoulou's seminal work on shock
instabilities of the Euler equation \cite{Christ2007}.)
In \cite{Christ2000} it is claimed without proof that the existence of a
time function and an observer field is sufficient to guarantee the
local well-posedness of the Cauchy problem with data prescribed on a
level surface of $t$, a proof will be supplied in \cite{SpeWon-pPP}; 
for the non-geometric scenario working over a
fixed coordinate system in Minkowski space, with $X = \partial_t$, a
proof is available in \cite{HuKaMa1976}. 

\subsection{Breakdowns of hyperbolicity}
When the existence of time functions and observer fields fail, the
regular hyperbolicity of the system breaks down. This in particular
implies the non-existence of general energy estimates by the work of
Christodoulou \cite{Christ2000}, and hence the impossibility of
applying the usual iteration method to obtain local existence and
uniqueness of solutions to the Cauchy problem. While this does not
prove the lack of hyperbolicity for those systems, here we provide
some heuristic arguments as to why the lack of regular hyperbolicity
is indicative of a lack of local well-posedness\footnote{We should
recall at this stage that we consider well-posedness in the sense of
Hadamard \cite{Hadama1923}: the existence of suitably regular
solutions, the uniqueness of said solutions, and the continuous
dependence of the solutions on given initial data.}.

To illustrate the different modes of breakdown, we first consider the
linear equations given by
\[ \epsilon_0 u_{tt} + \epsilon_1 u_{11} + \epsilon_2 u_{22} +
\epsilon_3 u_{33} \]
for some scalar $u$ on $\mathbb{R}^4$, with $\epsilon_*$ taking 
values $\pm 1$. The coefficients $m^{ab}_{AB}$ for this equation take
values in $1\times 1$ matrices, i.e.\ scalars, and we have that 
\[ m^{ab} = \mathop{diag}(\epsilon_0, \epsilon_1,
\epsilon_2,\epsilon_3)~.\]
The usual case of the wave equation is given by $-\epsilon_0 =
\epsilon_1 = \epsilon_2=\epsilon_3 = 1$. It is clear that then for a
given covector 
\[ \xi = \xi_0 dt + \xi_1 dx^1 + \xi_2 dx^2 + \xi_3dx^3~, \]
the requirement for $m^{ab}\xi_{a}\xi_b$ to be negative definite (that
is, negative as a scalar) is the usual condition that $\xi$ is
time-like with respect to the metric $m^{ab}$: $\xi_0^2 > \xi_1^2 +
\xi_2^2 + \xi_3^2$. Similarly, for $m^{ab}\xi_a\xi_b$ to be negative,
we just reverse the preceding inequality. Hence the wave equation is
hyperbolic if we choose $t$ to be a time-like (in the usual sense)
foliation and $X$ to be any time-like (again in the usual sense)
vector. 

The standard example in which one cannot construct any time-function
is the case of Laplace's equation, where $\epsilon_* = 1$, and for any
covector $m^{ab}\xi_a\xi_b$ is positive. For elliptic problems, it is
well-known \cite{Hadama1927, Tarkha1995} that the Cauchy problem is
ill-posed: there cannot be continuous dependence on initial data. 
By comparison, we shall say
that a Lagrangian theory of maps have an \emph{elliptic type
breakdown} of hyperbolicity if one cannot construct any
time-function even locally. Note that elliptic type breakdowns are not
the same as the equations forming a \emph{bona fide} regularly elliptic
system, which requires the ``Legendre-Hadamard condition''
\cite{Christ2000} that for all covectors $\xi$,
$m^{ab}_{AB}\xi_a\xi_b$ be positive definite matrices. What we call
elliptic type break down only suffices that none of those matrices 
be negative definite, and in particular mixed signatures will imply
break down.

The case where one can construct a time-function but not any observer
fields we refer to as \emph{ultrahyperbolic type breakdown}; the name
is taken from the canonical example of the ultrahyperbolic equation
\[ -u_{tt} - u_{11} + u_{22} + u_{33} = 0 \]
where any foliation with normal covector $\xi$ satisfying $\xi_0^2 +
\xi_1^2 > \xi_2^2 + \xi_3^2$ contributes a time-function, but the
trace of $m^{ab}$ to any three-plane is indefinite. The ultrahyperbolic
equation have infinite speeds of propagation \cite{John1955}, and can
be checked by the theorem alluded to in the beginning of this section
to have a non-hyperbolic polynomial symbol, and hence cannot admit
well-posed Cauchy problems. (The instability of  ultrahyperbolic
equations has also been considered on physical grounds in the
literature, see e.g.\ \cite{Dorlin1970}.)
\begin{rmk}\label{rmk:ultrahyper}
The ultrahyperbolic equation illustrates an important connection
between hyperbolicity and finite speed of propagation. As mentioned,
the ultrahyperbolic equation does not admit well-posed Cauchy problem
for \emph{all} smooth initial data. It was however pointed out by
Craig and Weinstein \cite{CraWei2009} that, if one were willing to impose
``non-local constraints'' (in their case a correlation on the
admissible space-time frequencies of the waves), the Cauchy problem
can be well-posed in the restricted class. One should think of the
non-local constraints as circumventing the instabilities caused by
infinite speeds of propagation. For semilinear equations, it may be
possible to impose such constraints \emph{a priori} and globally
(provided these additional constraints are compatible with the 
equations of motions; the constraint given in \cite[\S 2]{CraWei2009}
is not preserved if one were to modify the ultrahyperbolic equation by
a power nonlinearity $|u|^pu$); the situation for quasilinear equations is much
less clear. 
\end{rmk}

\subsection{Applying to Lagrangian theory of maps}
A direct computation shows that the coefficient tensor $m^{ab}_{AB}$
can be obtained as the second variational derivative of the Lagrangian
function relative to the field velocity. That is:
\begin{equation}\label{eq:defprincipal}
m^{ab}_{AB}(x,\phi,\partial\phi) = \frac{\delta^2}{\delta
\partial_a\phi^{A}~\delta\partial_b\phi^{B}}L(x,\phi,\partial\phi)~.
\end{equation}
Noting that the positive (negative) definite matrices form a convex
cone inside the space of matrices, we see that if $t$ and $X$ are
simultaneously time functions and observer fields for a collection of
Lagrangian functions $L_i$, then they will also form a pair of time
function and observer field for any convex sum of the $L_i$'s. 

Turning now to the individual cases $L = \sigma_j(D^\phi)$ as
described in Section \ref{sect:lag}, we can again use
\eref{eq:lagrangiantensorly} to facilitate computations. Directly we
get
\begin{eqnarray*}
\fl \frac{\delta^2L}{\delta \partial_d\phi^D~\delta\partial_b\phi^B}
= 2j g^{a_1[c_1|}\cdots g^{a_j|c_j]} (\phi^*h)_{a_1 c_1} \cdots 
(\phi^*h)_{a_{j-2}c_{j-2}}\\
\cdot \left[ (\phi^*h)_{a_{j-1} c_{j-1}}
\delta_{a_j}^b\delta_{c_j}^dh_{BD} + 2(j-1)
\delta_{(c_{j-1}}^b\partial_{a_{j-1})}\phi^A
\partial_{(a_j}\phi^{A'}\delta_{c_j)}^dh_{AB}h_{A'D}
\right] \end{eqnarray*}
Now consider an arbitrary time-like unit covector $\xi$, and let $\hat{g}$
be the restriction of $g$ to the orthogonal complement of $\xi$ (in
particular it is a positive semidefinite bilinear form), we see that
we can write $m^{bd}_{BD}\xi_b\xi_d$ in the following way
\begin{eqnarray}
\fl \nonumber m^{bd}_{BD}\xi_b\xi_d = -2 \hat{g}^{a_1[c_1|}\cdots \hat{g}^{a_{j-1}|c_{j-1}]}
(\phi^*h)_{a_1 c_1} \cdots (\phi^*h)_{a_{j-2}c_{j-2}} \\
\label{eq:existtimefunction} \cdot \left[
(\phi^*h)_{a_{j-1} c_{j-1}} h_{BD} - 
\partial_{c_{j-1}}\phi^A\partial_{a_{j-1}}\phi^Ch_{AB}h_{CD}
\right]
\end{eqnarray}
after carefully counting the various permutations of indices leaving
the expression non-zero. Using a Cauchy inequality on the term now in
the brackets, we see that it is positive semi-definite; it will be
positive definite whenever $d\hat{\phi}|_{p\in M}: T_pM\supset
\xi^\perp \to T_{\phi(p)}N$ is has rank at least 2 (that is, the 
tangent space map $d\phi$ restricted to the vectors in the kernel of 
$\xi$ has rank at least 2). Therefore, we have the following theorem.

\begin{thm}\label{thm:existtimefn}
Let the Lagrangian $L$ be a convex linear combination of
$\sigma_j(D^\phi)$ and $s(\phi)$. And let $t$ be an arbitrary time
function relative to the underlying metric $g_{ab}$. Then
\begin{enumerate}
\item if the coefficient for $\sigma_1$ is non-zero, then $t$ is a
time function for the equations of motion;
\item if $d\phi$ restricted to the level sets of $t$ has rank $\geq
\max(j-1,2)$, then $t$ is a time function for the equations of motion;
\item failing both of the above, $t$ is borderline degenerate, but
the corresponding $m^{ab}_{AB}(dt)_a(dt)_b$ is positive semi-definite. 
\end{enumerate} 
In particular, $L$ cannot have a \emph{bona fide} elliptic type
breakdown of hyperbolicity. 
\end{thm}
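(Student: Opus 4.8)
The plan is to establish the three cases by directly analysing the matrix-valued quadratic form $m^{ab}_{AB}(dt)_a(dt)_b$ using the representation \eref{eq:existtimefunction}. First I would observe that, by the remark following \eref{eq:defprincipal}, it suffices to treat each summand $L = \sigma_j(D^\phi)$ separately: since $m^{ab}_{AB}$ depends bilinearly on the Lagrangian via the second variational derivative, and since the negative-definite matrices form a convex cone, a convex linear combination of Lagrangians whose $m^{ab}_{AB}(dt)_a(dt)_b$ are each negative semi-definite (and at least one negative definite) will have $m^{ab}_{AB}(dt)_a(dt)_b$ negative definite. The term $s(\phi)$ contributes $0$ to $m^{ab}_{AB}$ since it has no derivative dependence, so it is irrelevant to this computation. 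The sign convention means I want $m^{ab}_{AB}(dt)_a(dt)_b$ to be \emph{negative} definite for $t$ to be a time function (recall condition (i) in the definition of time function in Section \ref{sect:rh}), which matches the overall minus sign appearing in \eref{eq:existtimefunction}.

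The core of the argument is the sign and rank analysis of the bracketed expression in \eref{eq:existtimefunction}, namely
\[
(\phi^*h)_{a_{j-1}c_{j-1}}h_{BD} - \partial_{c_{j-1}}\phi^A\partial_{a_{j-1}}\phi^C h_{AB}h_{CD},
\]
contracted against the positive semi-definite tensor built from $\hat g$ and copies of $\phi^*h$. I would fix an orthonormal frame on $\xi^\perp$ and a frame on $T_{\phi(p)}N$, writing $d\hat\phi$ (the restriction of $d\phi$ to $\xi^\perp$) as a matrix; then $(\phi^*h)|_{\xi^\perp}$ is $d\hat\phi^{\mathsf T} h\, d\hat\phi$ and the bracket, viewed as an operator on $\xi^\perp \otimes \mathbb{R}^n$, is $(\phi^*h)_{a_{j-1}c_{j-1}}\otimes h - (\text{rank-one-in-space part})$. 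A Cauchy–Schwarz/Cauchy inequality (as the text indicates) shows this bracket is positive semi-definite: schematically, for any covector-valued object $v^{A}_{a}$, the quantity $(\phi^*h)(v_\cdot,v_\cdot)$ dominates $|h(d\hat\phi \cdot, v_\cdot)|^2$ after summing, with equality forcing a degeneracy. Combined with the positive semi-definiteness of the $\hat g$-contraction factor, the whole of \eref{eq:existtimefunction} is then negative semi-definite, giving case (iii) unconditionally.

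For the strict definiteness in cases (i) and (ii), I would track exactly when equality can occur in the Cauchy inequality. For $j=1$ the tensor \eref{eq:existtimefunction} collapses to $-2[\hat g_{BD} - \partial_\cdot\phi^A\partial_\cdot\phi^C h_{AB}h_{CD}]$ wait — more precisely to $-2h_{BD}\,\hat g^{ac}(\dots)$ type terms; since $\hat g$ is positive definite on $\xi^\perp$ and $h$ is positive definite on $N$, and the subtracted term is controlled strictly, one gets negative definiteness outright, which is case (i). For general $j$, the outer factor $\hat g^{a_1[c_1}\cdots \hat g^{a_{j-1}]c_{j-1}}(\phi^*h)_{a_1c_1}\cdots(\phi^*h)_{a_{j-2}c_{j-2}}$ is the $(j-2)$-th induced "strain" form on $\Lambda^{j-2}$ of $\xi^\perp$ paired through $\phi^*h$; it is positive definite on that subspace precisely when $\mathrm{rank}(d\hat\phi)\geq j-2$, and the bracket factor is positive definite on $\mathbb{R}^n$ precisely when $\mathrm{rank}(d\hat\phi)\geq 2$ (two independent directions are needed so that the "diagonal" $\phi^*h$ term strictly dominates the rank-one correction). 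Taking both, $\mathrm{rank}(d\hat\phi)\geq\max(j-2,2)$ suffices; I would need to recheck whether the paper's stated threshold $\max(j-1,2)$ versus $\max(j-2,2)$ is the sharp one — the extra copy may come from the anti-symmetrization consuming one more slot, so I expect $\max(j-1,2)$ after bookkeeping. Finally, the last sentence follows immediately: an elliptic type breakdown means \emph{no} time function exists even locally, but cases (i)–(iii) show that \emph{every} $g$-time function $t$ yields at worst a degenerate-but-semi-definite $m^{ab}_{AB}(dt)_a(dt)_b$, never one of indefinite signature, so the obstruction to being a time function is only a borderline degeneracy, not genuine ellipticity.

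The main obstacle I anticipate is the precise combinatorial bookkeeping of index permutations in passing from the second variation formula to \eref{eq:existtimefunction} and then tracking which rank threshold ($j-1$ versus $j-2$) is actually forced — the anti-symmetrization brackets make it easy to miscount by one, and the equality case of the Cauchy inequality has to be matched exactly against the rank-of-$d\hat\phi$ condition. Everything else (convex cone reasoning, positivity of $\hat g$ and $h$, reduction to single $\sigma_j$) is routine.
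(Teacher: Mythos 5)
Your proposal is correct and follows essentially the same route as the paper: reduce to a single $\sigma_j$ via convexity of the cone of negative-definite matrices, contract the second variational derivative with $(dt)_a(dt)_b$ to reach \eref{eq:existtimefunction}, and apply a Cauchy inequality to the bracketed term, with the rank of $d\phi$ restricted to $\xi^\perp$ governing when the inequality is strict. On the one point you flagged, the threshold is indeed $\max(j-1,2)$ --- the outer factor carries $j-1$ antisymmetrized copies of $\hat g$ and is proportional to $\sigma_{j-1}$ of the strain restricted to $\xi^\perp$, hence vanishes unless that restriction has rank at least $j-1$, while strictness of the Cauchy inequality in the bracket needs rank at least $2$ --- and your sign in the degenerate case (negative semi-definite for $m^{ab}_{AB}(dt)_a(dt)_b$, i.e.\ semi-definite of the sign a time function requires) is the one consistent with \eref{eq:existtimefunction}.
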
 
Noting that the domain of dependence in theory of hyperbolic equations
is determined precisely by the admissible time-functions
\cite{Leray1952, Christ2000}, we have the following interesting
corollary, which implies that a Lagrangian theory of maps with the
dominant energy condition (compare Theorem \ref{thm:lagsuffdec}), when
in fact hyperbolic, cannot propagate perturbations faster than the
speed of gravity. 
\begin{cor}
Let $L = L(s,\sigma_1,\ldots,\sigma_j)$ be a Lagrangian function
with the property that $L$ is non-decreasing and concave in its
arguments. Assume further that for some $k$, the partial derivative of
$L$ with respect to $\sigma_k$ is bounded from below, and that either
(a) $k = 1$ (b) admissible solutions have $\mathop{rank}(d\phi) \geq 
\max(k,3)$. 
Supposing the dynamics derived from $L$ is regularly hyperbolic, then
the domain of dependence for solutions must be strictly contained
within that of the linear wave equation on the space-time $(M,g)$.
\end{cor}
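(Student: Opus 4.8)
The plan is to reduce the statement to a comparison of the cones of admissible time functions, and then to read that comparison off from the chain rule for the principal symbol together with the monotonicity and concavity of $L$. Recall (as noted just above the corollary, and see \cite{Leray1952, Christ2000}) that for a regularly hyperbolic Lagrangian theory of maps the domain of dependence is governed by the admissible time functions, i.e.\ by the covectors $\xi$ for which $m^{ab}_{AB}\xi_a\xi_b$ is negative definite; and that the linear wave equation on $(M,g)$ has principal symbol $g^{ab}$, whose time functions are exactly the functions with $g$-timelike differential. Thus it suffices to show that every $g$-timelike covector is an admissible conormal for the $L$-dynamics (which forces the $L$-characteristic cone in $TM$ to lie inside the $g$-null cone, hence the domain of dependence $D^L$ to lie inside that of the linear wave equation), and then, for the strictness, that some $g$-null covector is admissible as well.

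The first point follows from the chain rule. Computing $m^{ab}_{AB}$ from \eref{eq:defprincipal} and using that the $\partial\phi$-dependence of $L$ enters only through the invariants $\sigma_i(D^\phi)$ — the $s(\phi)$-part and the $\partial_s\partial_{\sigma_i}L$ cross terms drop out since $\delta s/\delta\partial\phi = 0$ — one gets
\[
m^{ab}_{AB} \;=\; \sum_i \frac{\partial L}{\partial\sigma_i}\, m^{ab}_{AB}[\sigma_i]
\;+\; \sum_{i,i'} \frac{\partial^2 L}{\partial\sigma_i\,\partial\sigma_{i'}}\;
\frac{\delta\sigma_i}{\delta\partial_a\phi^A}\,\frac{\delta\sigma_{i'}}{\delta\partial_b\phi^B}~,
\]
where $m^{ab}_{AB}[\sigma_i]$ is the principal symbol of the Lagrangian $\sigma_i(D^\phi)$ taken alone, whose contraction against a timelike covector is \eref{eq:existtimefunction}. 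Contract with a $g$-timelike $\xi$: in the first sum each coefficient $\partial L/\partial\sigma_i\geq 0$ (monotonicity) and each $m^{ab}_{AB}[\sigma_i]\xi_a\xi_b$ is negative semi-definite by \eref{eq:existtimefunction}; in the second sum, writing $v^{(i)}_A := \xi_a\,\delta\sigma_i/\delta\partial_a\phi^A$, the contraction equals $\sum_{i,i'}(\mathrm{Hess}_\sigma L)_{ii'}\,v^{(i)}_A v^{(i')}_B$, which is negative semi-definite because $L$ is jointly concave. Hence $m^{ab}_{AB}\xi_a\xi_b$ is negative semi-definite on $g$-timelike $\xi$, and it remains to upgrade this to negative definiteness. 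Here the extra hypotheses enter: in case (a), $\partial L/\partial\sigma_1$ bounded below by a positive constant makes the $\sigma_1$-contribution $(\partial L/\partial\sigma_1)(g^{ab}\xi_a\xi_b)h_{AB}$ already negative definite, and adding the remaining semi-definite terms preserves this; in case (b), the hypothesis $\mathrm{rank}(d\phi)\geq\max(k,3)$ is exactly what Theorem \ref{thm:existtimefn}(ii) needs for $m^{ab}_{AB}[\sigma_k]\xi_a\xi_b$ alone to be negative definite on the $g$-spacelike hyperplane $\xi^\perp$ (passing from $T_pM$ to $\xi^\perp$ lowers the rank by at most one), and again the rest does no harm. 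Either way every $g$-time function is an $L$-time function, giving the containment $D^L \subseteq D^{\mathrm{wave}}$.

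For the strictness I would run the same computation at a borderline $g$-null conormal $\xi$. The complication is that then $\hat g := g|_{\xi^\perp}$ is only positive semi-definite of rank $m-1$, so \eref{eq:existtimefunction} is no longer directly valid and $m^{ab}_{AB}[\sigma_k]\xi_a\xi_b$ must be recomputed from the explicit formula preceding it, tracking which index permutations survive the degeneracy. The expected outcome is that the Cauchy-type inequality used there remains strict provided $d\phi$ restricted to $\xi^\perp$ still has rank at least $\max(k-1,2)$, which is guaranteed by $\mathrm{rank}(d\phi)\geq\max(k,3)$; together with $\partial L/\partial\sigma_k$ bounded away from zero and the remaining terms still negative semi-definite, this makes $m^{ab}_{AB}\xi_a\xi_b$ negative definite at a $g$-null $\xi$, so that $\xi$ is an $L$-time function but not a wave-equation time function, and the containment is strict.

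The main obstacle is exactly this last step. The representation \eref{eq:existtimefunction} was derived under the standing assumption that $\xi$ is timelike, so that $\hat g$ is an honest inner product and the genericity arguments apply; the degenerate-cone analysis — checking that the rank hypothesis keeps the relevant inequality strict when $\hat g$ acquires a one-dimensional kernel — is the delicate point. One should also note that in case (a) the $\sigma_1$-symbol $g^{ab}h_{AB}$ degenerates on the $g$-null cone, so there the strictness must instead be extracted from the genuine dependence of $L$ on the higher invariants (or from genuine concavity); the argument as written delivers the strictness unconditionally only in case (b).
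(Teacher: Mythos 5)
Your containment argument coincides with the paper's own proof: the same reduction to showing that every time function of $(M,g)$ is a time function for the $L$-dynamics (so that the energy estimate pins the domain of dependence inside that of the wave equation), and the same chain-rule decomposition of $m^{ab}_{AB}$ from \eref{eq:defprincipal} into the Hessian-of-$L$ term, handled by concavity, plus the convex combination $\sum_i(\partial L/\partial\sigma_i)\,m^{ab}_{AB}[\sigma_i]$, handled by monotonicity together with Theorem \ref{thm:existtimefn}, with hypothesis (a) or (b) supplying the one definite summand exactly as you describe (your rank bookkeeping, losing at most one dimension on passing to $\xi^\perp$, is the intended reading). Where you part ways is the final ``strictness'' step, and here you should know that the paper simply does not attempt it: its proof stops once the $g$-timelike conormals are shown admissible, and the word ``strictly'' in the statement is never separately justified. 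Your instinct that this is the delicate point is correct, and your own case-(a) caveat in fact shows the literal statement cannot hold in general: for $L=\sigma_1$ (pure wave maps) the $L$-characteristics are exactly the $g$-null cone, so the containment is an equality, not strict. The honest content of the corollary --- and all that either your argument or the paper's establishes --- is the non-strict containment, i.e.\ no propagation faster than gravity; the degenerate-$\hat g$ analysis at a $g$-null conormal that you sketch would be needed only to upgrade this and would require hypotheses beyond those stated. So there is no gap in what you actually proved, and you were right to be suspicious of ``strictly''.
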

\begin{proof}
Using the energy estimate, it suffices to show that any
time function $t$ of the underlying space-time is a time function for
the Lagrangian theory (see \cite[\S 5.3]{Christ2000} for a
discussion). Using \eref{eq:defprincipal} we write
\[ m^{ab}_{AB} = \frac{\partial^2L}{\partial \sigma_i
\partial\sigma_j} \frac{\delta \sigma_i}{\delta
\partial_a\phi^A} \frac{\delta \sigma_j}{\delta
\partial_b\phi^B} + \frac{\partial L}{\partial \sigma_j}
\frac{\delta^2\sigma_j}{\delta \partial_a\phi^A~\delta
\partial_b\phi^B}~.\]
After contracting with $(dt)_a(dt)_b$, 
the first term is positive semi-definite using the concavity of the
Lagrangian functional. The second term is a convex sum of positive
semi-definite matrices, and by Theorem \ref{thm:existtimefn} and the
hypotheses, at least one of the terms is positive definite. Hence $t$
is a time function for the Lagrangian theory of maps described by $L$. 
\end{proof}
\begin{rmk}
We can compare the above corollary to the situation in isentropic
fluids. After choosing an equation of state, one can write the
pressure $p$ of the fluid as a function of the proper energy density
$\rho$. In this regime the well-known criterion for hyperbolicity is
that the speed of sound $\sqrt{dp/d\rho}$ is positive and real, while the
criterion for the fluid flow to be causal is that the speed of sound
is less than the speed of gravity (which we can choose to equal 1)
\cite{Tahvil1998}. The proper energy density is in fact the Lagrangian
function for a fluid. Writing $\tau = (\sigma_n(D^\phi))^{-1/2}$ for
the volume per particle, the pressure can be given \cite{Christ2000}
by $p = - \frac{d}{d\tau}(\tau \rho)$. Then with a bit of elementary
calculus, one sees that the causality conditions are equivalent to the
following statement on $L = L(\sigma_n(D^\phi))$:
\[ 0 > 2 \sigma_n L'' > - L'~. \]
The first inequality guarantees the speed of sound is smaller than
that of gravity, and it states precisely that $L$ is concave, agreeing
with the corollary above. The second inequality guarantees
hyperbolicity; compare with the discussion below. Note that a
necessary condition for the second inequality to hold is that $L$ is
increasing as a function of $\sigma_3$, which is also a part of the
hypothesis in the above corollary. 
\end{rmk}

The analogous statement for observer fields, however, does not hold.
Letting $\eta$ now be an arbitrary space-like unit covector, and let
$X$ be a unit time-like vector orthogonal to $\eta$. Denote now by
$\tilde{g}$ the restriction of $g$ to the orthogonal complement of
$\eta$ and the metric dual of $X$. We see
that the analogous statement to \eref{eq:existtimefunction} is the
following
\begin{eqnarray}
\fl \nonumber m^{bd}_{BD}\eta_b\eta_d = 2 \tilde{g}^{a_1[c_1|}\cdots
\tilde{g}^{a_{j-1}|c_{j-1}]}(\phi^*h)_{a_1 c_1} \cdots
(\phi^*h)_{a_{j-2}c_{j-2}} \\ 
\nonumber \qquad \cdot \left[(\phi^*h)_{a_{j-1} c_{j-1}} h_{BD} - 
\partial_{c_{j-1}}\phi^A\partial_{a_{j-1}}\phi^Ch_{AB}h_{CD}
\right] \\
\nonumber -\frac{2}{j-1}\tilde{g}^{a_1|c_1|}\cdots
\tilde{g}^{a_{j-2}|c_{j-2}]}(\phi^*h)_{a_1 c_1} \cdots
(\phi^*h)_{a_{j-2}c_{j-2}} \\
\qquad \cdot
\left[X\phi^AX\phi^Ch_{AC}h_{BD} -
X\phi^AX\phi^Ch_{AB}h_{CD} \right]\\
\nonumber -\frac{2j-4}{j-1}\tilde{g}^{a_1|c_1|}\cdots
\tilde{g}^{a_{j-2}|c_{j-2}]}(\phi^*h)_{a_1 c_1} \cdots
(\phi^*h)_{a_{j-3}c_{j-3}}\\
\nonumber \qquad X\phi^EX\phi^Fh_{EF}
\left[ (\phi^*h)_{a_{j-2} c_{j-2}} h_{BD} - 
\partial_{c_{j-2}}\phi^A\partial_{a_{j-2}}\phi^Ch_{AB}h_{CD}
\right]
\end{eqnarray}
where the first term on the right hand side is again positive semi-definite
by Cauchy's inequality, while the second and third terms are \emph{a
priori} negative semi-definite. In particular, comparing the first and
third terms, if $|X\phi|_h^2$ is large
compared to $\tilde{g}^{ac}(\phi^*h)_{ac}$, the bilinear form
$m^{bd}_{BD}\eta_b\eta_d$ can easily have negative eigenvalues; one
can compare this to the conclusion drawn in \cite{CruBel1994} about
instabilities of the Skyrme model. We are forced to conclude that
a general Lagrangian theory of maps is susceptible to ultrahyperbolic
type breakdowns of hyperbolicity. 

On the other hand, notice that in the case $X\phi = 0$, the form
$m^{bd}_{BD}\eta_b\eta_d$ is indeed positive semi-definite, and
furthermore positive definite if $d\phi$ satisfies an analogous
rank condition as in Theorem \ref{thm:existtimefn}. Also note
that in the case $j = 1$ (the semi-linear case), positive definiteness
of $m^{bd}_{BD}\eta_b\eta_d$ holds without need of any assumptions on
$X\phi$, as the problematic terms do not appear. 

\begin{rmk}
As seen in the discussions above, the term $\sigma_1$ in the
Lagrangian always introduce a factor that is regularly hyperbolic.
This in fact has a stabilizing effect on the hyperbolicity of the
field theory. An example will be given in Section \ref{sect:ex}, where
hyperbolicity of the Skyrme model persists into a regime where the
particle velocity (using a fluids interpretation) exceeds that of
speed of gravity. 
\end{rmk}

\subsection{Canonical stress versus Einstein-Hilbert
stress-energy}\label{sect:canstrvseh}
Thus far we have seen that for $\sigma_j$, the dominant energy
condition must hold; on the other hand, the computations above show
that the analogous statement for the canonical stress need not hold.
That is: for perturbations $\psi$ over a solution $\phi$ to the
equations of motions determined by $\sigma_j(D^\phi)$, the canonical
stress $Q[\psi]^c_d$ need \emph{not} have the property
``$Q[\psi]^c_dX^d\xi_c \geq 0$ for any time-like vector $X$ and 
time-like covector $\xi$ with $\xi(X) <0$''. However, it is also 
clear from the definitions that the canonical
stress tensor and the Einstein-Hilbert stress-energy tensor in fact
agree in the highest order derivative terms for the semilinear case $L
= \sigma_1(D^\phi)$, in which the equations of motion are always
regularly hyperbolic. What is, then, the difference between the two
stress tensors?

In fact we have the following:
\begin{prop}\label{prop:equivalence}
For $L = \sigma_j(D^\phi)$, the canonical stress tensor $Q[\phi]$
\emph{corresponding to the solution itself} agrees with $g^{-1}\circ
T$, the Einstein-Hilbert stress-energy tensor.
\end{prop}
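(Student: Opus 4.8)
The plan is to compute both tensors explicitly from their defining variational formulas and observe that they coincide term by term. The key point is that for $L = \sigma_j(D^\phi)$, the canonical stress $Q[\phi]^c_d$ is built from the principal coefficient $m^{ab}_{AB}$ and one derivative of $\phi$, while $g^{-1}\circ T$ is built from $\delta L/\delta g^{-1}$ and the same first derivatives; both are first-order tensorial expressions, so matching them is a finite linear-algebraic identity rather than a PDE argument.

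First I would write down $Q[\phi]^c_d = -Z^{ab}_{AB}|^c_d\,\partial_a\phi^A\partial_b\phi^B$ with $Z^{ab}_{AB}|^c_d = m^{ab}_{AB}\delta^c_d - m^{cb}_{AB}\delta^a_d - m^{ac}_{AB}\delta^b_d$, using the expression \eref{eq:defprincipal} for $m^{ab}_{AB}$ as the second field-velocity derivative of $L = \sigma_j$. Since $L$ is itself quadratic-homogeneous of degree $2j$ in $\partial\phi$ built out of $\phi^*h$, Euler's relation gives $m^{ab}_{AB}\partial_a\phi^A\partial_b\phi^B$ and its partial contractions in terms of $L$ and the ``partially contracted'' $j$-form expressions appearing in \eref{eq:lagrangiantensorly} and \eref{eq:setensortensorly}. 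Concretely: the first term $m^{ab}_{AB}\delta^c_d\,\partial_a\phi^A\partial_b\phi^B$ reproduces $2j\,L\,\delta^c_d$ (by homogeneity), giving the $-\frac12 L g$ piece of $T$ after raising an index and fixing normalization; the two remaining terms $-m^{cb}_{AB}\delta^a_d\,\partial_a\phi^A\partial_b\phi^B$ and $-m^{ac}_{AB}\delta^b_d\,\partial_a\phi^A\partial_b\phi^B$ are equal by the $a\leftrightarrow b$, $A\leftrightarrow B$ symmetry of $m$, and together they reproduce the first ``derivative-of-$h$'' term in \eref{eq:setensortensorly}, i.e.\ the $\delta L/\delta g^{-1}$ piece of $T$. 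So one checks $Q[\phi]^c_d = g^{ce} T_{ed}$ by comparing these two groups of terms against \eref{eq:def:setensor}.

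The cleanest way to organize the bookkeeping is to note that $\partial L/\partial(\phi^*h)_{ab} = j\,g^{a[b|}g^{a_2|c_2|}\cdots g^{a_j|c_j]}(\phi^*h)_{a_2c_2}\cdots(\phi^*h)_{a_jc_j}$ appears as the common building block of $\delta L/\delta g^{-1}$, of $T(X,\cdot)$ in \eref{eq:setensortensorly}, and (after one further $\partial_{\partial\phi}$) of $m^{ab}_{AB}$; then $Q[\phi]$, $g^{-1}\circ T$, and the contraction $\partial L/\partial(\phi^*h)_{ab}\cdot (\phi^*h)_{bd}$ are all the same object. Indeed the chain rule $\delta L/\delta g^{-1}_{ab} = \partial L/\partial(\phi^*h)_{cd}\cdot (\text{variation of } \phi^*h) $ and the analogous chain rule for $m^{ab}_{AB} = \partial^2 L/\partial(d\phi)\partial(d\phi)$ make the identification transparent once the numerical constants are tracked.

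The main obstacle is the combinatorial/normalization bookkeeping: getting the anti-symmetrization weights, the factor of $j$ from Euler homogeneity, and the factor of $2$ from symmetrizing $A\leftrightarrow B$ all to line up so that the ``$\frac12 L g$'' constants match on both sides. Since both $Q$ and $T$ are defined only up to a universal normalization of $\sigma_j$ (the proportionality constant between $\sigma_j$ and $\tr\Lambda^j$), the safest check — which I would actually carry out rather than trust the general manipulation — is to evaluate both sides on the model case $D^\phi = \mathop{diag}(-1,1,1,\ldots)$ used already to fix signs below \eref{eq:lagrangiantensorly}, confirming $Q[\phi]^c_d$ and $g^{ce}T_{ed}$ agree as matrices there; tensoriality then upgrades this to the general statement.
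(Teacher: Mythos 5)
Your overall strategy --- compute $Q[\phi]^c_d$ directly from $m^{ab}_{AB}$ via \eref{eq:defprincipal} and compare against \eref{eq:setensortensorly} --- is exactly the computation the paper invokes (and omits). However, your sketch contains a quantitative error at precisely the delicate point, and correcting it shows that your two ``groups of terms'' do not line up the way you assert. Since $L=\sigma_j(D^\phi)$ is homogeneous of degree $2j$ in $\partial\phi$, Euler's relation applied \emph{twice} gives $m^{ab}_{AB}\partial_a\phi^A\partial_b\phi^B = 2j(2j-1)L$, not $2jL$ as you claim; applied once to the first derivative it gives $m^{cb}_{AB}\partial_b\phi^B = (2j-1)\,\partial L/\partial(\partial_c\phi^A)$. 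Hence
\[ Q[\phi]^c_d = 2(2j-1)\left[\frac{\partial L}{\partial(\partial_c\phi^A)}\,\partial_d\phi^A - j\,L\,\delta^c_d\right], \]
whereas the Noether/Einstein--Hilbert combination you want to land on is $\frac{\partial L}{\partial(\partial_c\phi^A)}\partial_d\phi^A - L\,\delta^c_d$, which equals $2(g^{-1}\circ T)^c_d$. The trace piece and the non-trace piece acquire \emph{different} Euler factors ($2j(2j-1)$ versus $2(2j-1)$), whose ratio is $j$ rather than $1$; so for $j\geq 2$ the expression is not a constant multiple of $g^{-1}\circ T$ but differs from one by $-2(2j-1)(j-1)\,\sigma_j\,\delta^c_d$. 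Tracking exactly this ratio \emph{is} the content of the proposition, and your proposal passes over it with the incorrect factor $2jL$ (which happens to be right only when $j=1$, the semilinear case where the statement is anyway obvious).

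Your fallback verification would not catch this either: at $D^\phi=\mathop{diag}(-1,1,1,\ldots)$ in $3+1$ dimensions one has $\sigma_2=0$, so the discrepant trace term vanishes at precisely that test configuration and the two matrices would agree there even if they disagree generically. To repair the argument you must either carry the computation through with the correct homogeneity factors and account for the residual multiple of $\sigma_j\,\delta^c_d$ --- for instance by arguing that the object meant by ``the canonical stress corresponding to the solution itself'' is the Noether stress $\frac{\partial L}{\partial(\partial_c\phi^A)}\partial_d\phi^A - L\,\delta^c_d$ of the full nonlinear Lagrangian (for which the identification with $2\,g^{-1}\circ T$ is exact for every $j$ and follows from the chain rule through $\phi^*h$), rather than the frozen-coefficient quadratic stress $-Z^{ab}_{AB}|^c_d\,\partial_a\phi^A\partial_b\phi^B$ evaluated on $\psi=\phi$ --- or else test the normalisation at a configuration with $\sigma_j\neq 0$ so that the trace part is actually probed.
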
 
The proof follows from a direction computation of $Q[\phi]^c_dX^d$
using the formulae given in \eref{eq:defprincipal} \emph{et seq.}, and
comparison against \eref{eq:setensortensorly}. We omit the details
here. This proposition shows that for Lagrangians composed of convex
linear combinations of the $\sigma_j$'s, the dominant energy condition 
is sufficient to guarantee the existence of a compatible energy
current that controls the solution itself, but not its higher
derivatives. This is, of course, unsurprising, as the Einstein-Hilbert
stress-energy tensor is divergence free by definition, and hence for
any time-like vector field $X$, $g^{-1}\circ T\circ X$ is a compatible
current. 

It is tempting to propose that, in view of Proposition
\ref{prop:equivalence}, regular hyperbolicity implies the dominant
energy condition. This is, however, not necessarily the case in
general. Returning to \eref{eq:positivity} of the fundamental energy
estimate, we see that it only suffices that the compatible current is
comparable with the Sobolev norms \emph{in an integrated sense}, and
indeed, the conditions on regular hyperbolicity encoded in positivity
and negativity conditions on $m^{ab}_{AB}$ only guarantees that much.
By way of an example, we consider a linear equation on
$\mathbb{R}^{1+2}$ with a two-dimensional target manifold. Consider
the following set of matrices
\numparts
\begin{eqnarray}
\tilde{m}^{00}_{AB}  = \left(\begin{array}{cc} -1 & 0 \\ 0 & -1
\end{array}\right)\\ 
\tilde{m}^{01}_{AB} = \tilde{m}^{02}_{AB} = 0\\
\tilde{m}^{11}_{AB} = \left(\begin{array}{cc} 2 & 1 \\ 1 & 1
\end{array}\right)\\
\tilde{m}^{22}_{AB} = \left(\begin{array}{cc} 1 & 1 \\ 1 & 2
\end{array}\right)\\
\tilde{m}^{12}_{AB} = \tilde{m}^{21}_{AB} = \left(\begin{array}{cc} 1
& 1 \\ 1 & 1 \end{array}\right)
\end{eqnarray}
\endnumparts
and let $m^{ab}_{AB} = \tilde{m}^{ab}_{AB} - \epsilon
\delta^{ab}\delta_{AB}$ for some $\epsilon \ll 1$. Now, by
construction, $m^{00}_{AB}$ is clearly negative definite, so the usual
function $t$ is a time function. A direct computation shows that
$\tilde{m}^{ab}_{AB}\eta_a\eta_b$ is positive definite for any
non-zero $\eta$ that satisfies $\eta(\partial_t) = 0$. Hence for
sufficiently small $\epsilon$, the coefficient matrix $m^{ab}_{AB}$ is
regularly hyperbolic, and the equation 
\[ m^{ab}_{AB}\partial^2_{ab} \psi^B = 0 \]
has a well-posed initial value problem for finite energy initial data.
However, the canonical stress tensor $Q[\psi]^0_0$ is not point-wise
positive definite for all initial data! Take an arbitrary smooth initial 
data with compact spatial support on $t = 0$ such that inside the ball
of radius 1, the data takes the following explicit values:
\numparts
\begin{eqnarray}
\psi^1(0,x,y) = y~, \\
\psi^2(0,x,y) = -x~, \\
\partial_t\psi^A(0,x,y) = 0~.
\end{eqnarray}
\endnumparts
It is easy to check that for this initial data,
$\tilde{m}^{ab}_{AB}\partial_a\psi^A\partial_b\psi^B = 0$ in a
spatial neighborhood of 0. And hence in that neighborhood the energy
density for $Q[\psi]^0_0$ is in fact point-wise negative.

This lack of point-wise positivity is a reflection of the internal
structure of the target manifold $N$. Indeed, if $N$ were
one-dimensional or effectively one-dimensional (i.e.\ the coefficient
matrix $m^{ab}_{AB} = g^{ab}h_{AB}$ is \emph{separable}), two waves
that are spatially coincident and travels in the same direction must
interact, and hence a point-wise measurement can capture correct
notion of energy. But when $N$ has an internal structure, two waves
that are spatially coincident and travels in the same direction does
not have to interact strongly, since they maybe orthogonal \emph{on}
$N$. This orthogonality condition cannot be detected point-wise in the
energy density, but manifests itself as ``miraculous cancellations''
when the total energy is considered. In other words, to detect and
account for this orthogonality requires a spatial mode-decomposition
of the fields, as seen in \cite[Ch.\ 5]{Christ2000}. 

That dominant energy condition cannot guarantee hyperbolicity can be
seen also as a manifestation of this internal structure. Proposition
\ref{prop:equivalence} effectively says that perturbations $\psi$ 
which are \emph{parallel on $N$ to the background solution $\phi$} 
can also be controlled by the dominant energy condition. In the case
where the dimension of $N$ is greater than 1, it is precisely those
additional degrees of freedom, which cannot be accounted for merely 
by the dominant energy condition, that requires the framework of
regular hyperbolicity.

\section{Hyperbolicity of the Skyrme model}\label{sect:ex}
It is easy to see that the Skyrme model can be considered as a theory
of elasticity with the restrictions on the particle world-lines
relaxed. This connection has also been observed by Slobodeanu 
\cite{Slobod-p2010}. From the point of view of relativistic fluids and
elasticity, it then is perhaps less surprising that highly boosted
Skyrmions are expected to be unstable \cite{CruBel1994}. Here we
compute exactly the symbol $m^{ab}_{AB}$ associated to the Skyrme
model and exhibit an ultrahyperbolic type breakdown of hyperbolicity
in tachyonic regimes. 

For the following discussion, we choose the scale factors to normalize
the Skyrme Lagrangian \cite{Skyrme1954, Skyrme1961} to 
\begin{equation}
\fl S = \int \frac12 g^{ab}h_{AB}\partial_a\phi^A\partial_b\phi^B +
\frac14(g^{ab}h_{AB} g^{cd}h_{CD} - g^{ab}h_{AD}g^{cd}h_{CB})
\partial_a\phi^A
\partial_b\phi^B\partial_c\phi^C\partial_d\phi^D~dvol_g
\end{equation}
We ignore the mass term $s(\phi)$ as it plays no role in the
discussion for hyperbolicity; here
$\phi:\mathbb{R}^{1+3}\to\mathbb{S}^3$ is assumed as usual. Applying
Theorem \ref{thm:lagsuffdec} we see that (with or without the positive
mass term) the dominant energy condition is always satisfied for this
model. 

Using \eref{eq:defprincipal}, the coefficients $m^{ab}_{AB}$ can be
computed to effectively be
\begin{equation}
\fl m^{ab}_{AB} = g^{ab}h_{AB}(1 + |\partial \phi|^2_{g,h}) +
\partial^a\phi_A \partial^b\phi_B -
g^{ab}g^{cd}\partial_c\phi_A\partial_d\phi_B -
h_{AB}h_{CD}\partial^c\phi^C\partial^d\phi^D
\end{equation}
From Theorem \ref{thm:existtimefn} we already see that any time
function of Minkowski space is an admissible time function. Therefore
it suffices to examine the conditions for $m^{ab}_{AB}\eta_a\eta_b$ to
be positive definite. 

The following computations are most illustrative in a frame adapted to
$d\phi$. There are two cases: the kernel of $d\phi$ being
a degenerate (null) subspace, and everything else. In the everything else
category, there exists an orthonormal frame of that simultaneously
diagonalizes $g$ and $\phi^*h$; in the case the kernel of $d\phi$ is a
degenerate, there exists an exceptional null frame (see, e.g.\
\cite[Ch.\ 9, exercises 18 and 19]{ONeil1983} or \cite{LanRod2005}, 
with some cases ruled
out by virtue of $\phi^*h$ being positive semidefinite). The null case can be
checked to be hyperbolic much in the same way as described below, so
we shall omit its discussion (note that for the null case the
canonical stress corresponding to $\sigma_2$ is degenerate; already it
requires the presence of the wave-map term $\sigma_1$ in the
Lagrangian for hyperbolicity) and focus on the generic case of an
orthonormal basis. Let the vector basis be $e_0,e_1,e_2,e_3$, orthonormal
with respect to the metric $g$, and diagonal with respect to
$(\phi^*h)$. Let their corresponding covector basis
be $f_0,f_1,f_2,f_3$ where $f_j(e_i) = \delta_{ij}$. And let
$\lambda_0^2,\lambda_1^2,\lambda_2^2,\lambda_3^2$ be the values for
$(\phi^*h)(e_i,e_i)$ respectively. Note at least one of them must 
vanish from rank
considerations. Let $e'_i$ be a unit vector in the tangent space of
the target manifold such that the push-forward $d\phi(e_i) = \lambda_i
e'_i$: for $e_i$ such that $d\phi(e_i)= 0$, choose $e'_i$ such that
one of the $e'_i$ vanishes, and the rest forms an orthonormal basis
for $h$. Take $f'_i$ to be their corresponding covectors. 

Then we can write
\numparts
\begin{eqnarray}
g^{ab} = (- e_0\otimes e_0 + e_1 \otimes e_1 + e_2\otimes e_2 +
e_3\otimes e_3)^{ab} \\
h_{AB} = (f'_0\otimes f'_0 + f'_1\otimes f'_1 + f'_2 \otimes f'_2 + f'_3\otimes f'_3)_{AB}\\
\partial_a\phi^A = (\lambda_0f_0\otimes e'_0 + \lambda_1f_1\otimes
e'_1 + \lambda_2f_2\otimes e'_2 + \lambda_3 f_3\otimes e'_3)_a^A
\end{eqnarray} 
\endnumparts
Let us consider, in this basis, the component $m^{3*}_{AB}$. (The
computations are symmetric in $\{1,2,3\}$, and the component
$m^{00}_{AB}$ is already treated by Theorem \ref{thm:existtimefn}.) A
direct computation shows that 
\numparts
\begin{eqnarray}
\fl\nonumber m^{33}_{AB} = (1 + \lambda_1^2 + \lambda_2^2)f'_0\otimes f'_0
+ (1 - \lambda_0^2 + \lambda_2^2) f'_1 \otimes f'_1 \\
+ (1 - \lambda_0^2 + \lambda_1^2) f'_2\otimes f'_2 +
(1 - \lambda_0^2 + \lambda_1^2 + \lambda_2^2)f'_3\otimes f'_3\\
\fl m^{30}_{AB} = -\lambda_3\lambda_0 (f'_3)_A (f'_0)_B \\
\fl m^{31}_{AB} = \lambda_3\lambda_1 (f'_3)_A (f'_1)_B 
\end{eqnarray}

Now first consider the case where $\lambda_0 = 0$. Then we see that
$m^{33}_{AB}$ is automatically positive definite. This case we have
already discussed in the previous section: if $d\phi$ has a time-like
element in its kernel, then that element is a valid observer field. 

For the case where $\lambda_0 \neq 0$, we consider first when
$\lambda_3 = 0$. Then we see that if $\lambda_0^2 > 1 +
\min(\lambda_1^2,\lambda_2^2)$, we have $m^{33}_{AB}$ no longer
positive. Furthermore, we have by the above computation, that in this
case, for $\eta = s f_3 + r f_0$, $m^{ab}_{AB} = s^2 m^{33}_{AB} + r^2
m^{00}_{AB}$, a sum of a indefinite bilinear form with a negative
definite one. Therefore we see that for any covector $\eta$ in the span of
$f_0$ and $f_3$, we cannot have $m^{ab}_{AB}\eta_a\eta_b$ be positive
definite. In particular, this implies that in this case regular
hyperbolicity must fail. 

For the case where $\lambda_0 \neq 0 \neq \lambda_3$, and $\lambda_1 =
0$, we see that $m^{33}_{AB}$ loses positivity as soon as $\lambda_0^2
> 1$. Now we consider again $\eta = s f_3 + r f_0$. Computing
explicitly we get
\begin{eqnarray*}
\fl m^{ab}_{AB}\eta_a\eta_b = (s^2- r^2)(1 + \lambda_2^2)f'_0\otimes f'_0
+ \left[ s^2 (1 - \lambda_0^2 + \lambda_2^2)- r^2 (1 + \lambda_2^2 +
\lambda_3^2)\right] f'_1 \otimes f'_1 \\
+ \left[ s^2 (1 - \lambda_0^2) - r^2(1 + \lambda_3^2)\right] f'_2\otimes f'_2 
+ (s^2-r^2) (1 + \lambda_2^2)f'_3\otimes f'_3 \\
- (s\lambda_0 f'_3 + r\lambda_3 f'_0)\otimes(s\lambda_0f'_3 + r\lambda_3
  f'_0)
\end{eqnarray*}
where we see that the coefficient of $f'_2\otimes f'_2$ will always
be negative. This implies that for any $\eta$ of this form, when
$\lambda_1 = 0$ and $\lambda_0^2 > 1$, we have that
$m^{ab}_{AB}\eta_a\eta_b$ cannot be positive definite.

From these computations, we get the following theorem
\begin{thm}\label{thm:skyrmehyperregime}
The Skyrme model is regularly hyperbolic when any of the following
is true:
\begin{enumerate}
\item The kernel of $d\phi$ contains a time-like vector.
\item The kernel of $d\phi$ is a degenerate subspace.
\item The time-like eigenvector for $D^\phi$ has a corresponding
eigenvalue with norm less than 1. 
\end{enumerate}
The Skyrme model has an ultrahyperbolic breakdown of hyperbolicity
when $D^\phi$ admits a time-like eigenvector with eigenvalue with norm
greater than 1.
\end{thm}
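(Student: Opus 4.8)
The plan is to check, in each listed case, the two conditions for regular hyperbolicity recalled in Section~\ref{sect:rh}: the existence of a time function $t$ with $m^{ab}_{AB}(dt)_a(dt)_b$ negative definite, and of an observer field $X$ with $m^{ab}_{AB}\eta_a\eta_b$ positive definite for every non-zero covector $\eta$ annihilating $X$. The time function is free in every case: the normalised Skyrme Lagrangian has a non-zero $\sigma_1$-coefficient, so Theorem~\ref{thm:existtimefn}(i) says every time function of the Minkowski metric is one for the equations of motion. Hence the whole proof reduces to constructing, or ruling out, observer fields, and I would do this in the $d\phi$-adapted frame $e_0,\ldots,e_3$ (dual coframe $f_0,\ldots,f_3$, target unit frame $e'_0,\ldots,e'_3$) set up just before the statement, reusing the computed expressions for $m^{33}_{AB}$, $m^{30}_{AB}$, $m^{31}_{AB}$ and the general formula for $m^{ab}_{AB}$.

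For the three hyperbolic cases: in case~(i) the vector $X$ itself, a timelike element of $\ker d\phi$, is an observer field, since $X\phi=0$ makes the $\sigma_2$-part of $m^{ab}_{AB}\eta_a\eta_b$ collapse to the Cauchy-positive semidefinite term of the discussion following \eref{eq:existtimefunction}, while the $\sigma_1$-part contributes $2|\eta|_g^2\,h_{AB}$, which is positive definite for spacelike $\eta$; the sum is positive definite. Case~(ii) is the same argument in the exceptional null frame: the $\sigma_2$ canonical stress is degenerate there, but the $\sigma_1$-part again yields a positive definite contribution for every spacelike $\eta$ orthogonal to a suitably chosen null-adapted $X$, and one verifies that positivity survives the sum, following the author in deferring the routine null-frame bookkeeping. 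In case~(iii) the timelike eigenvector $e_0$ of $D^\phi$ has eigenvalue $-\lambda_0^2$ with $\lambda_0^2<1$; taking $X=e_0$ and any spacelike $\eta=\sum_{j\ge1}\eta_j f_j$ with $\sum_j\eta_j^2=1$, I would expand $m^{ab}_{AB}\eta_a\eta_b$ in the $f'_k$-basis into the form $(A+\lambda_0^2)\,f'_0\otimes f'_0+\sum_{j\ge1}(A-\lambda_j^2)\,f'_j\otimes f'_j+v\otimes v$, where $v=\sum_{j\ge1}\eta_j\lambda_j f'_j$ and $A=1-\lambda_0^2+\sum_{i\ge1}(1-\eta_i^2)\lambda_i^2\ge 1-\lambda_0^2>0$, and then show this quadratic form is positive definite exactly because $\lambda_0^2<1$, so that $e_0$ is an observer field.

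For the ultrahyperbolic breakdown the task is to show that no observer field exists. Since $e_0$ is a timelike eigenvector with $\lambda_0^2>1$ we have $\lambda_0\ne0$, and as $\mathrm{rank}(d\phi)\le3<4$ some eigenvalue must vanish, hence some spatial $\lambda_k=0$. I would relabel the spatial frame so that $\lambda_1=0$ and, when possible, $\lambda_3\ne0$; then the computations preceding the statement show that for every $\eta=sf_0+rf_3$ in the $2$-plane $P=\mathrm{span}(f_0,f_3)$ the matrix $m^{ab}_{AB}\eta_a\eta_b$ fails to be positive definite, since if $\lambda_3\ne0$ its $f'_2\otimes f'_2$ entry equals $s^2(1-\lambda_0^2)-r^2(1+\lambda_3^2)<0$, while if all spatial eigenvalues vanish it equals $s^2 m^{33}_{AB}+r^2 m^{00}_{AB}$ with $m^{33}_{AB}$ indefinite and $m^{00}_{AB}$ negative definite. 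Now for any non-zero vector field $X$ the annihilator $\{\eta:\eta(X)=0\}$ is $3$-dimensional and therefore meets the $2$-plane $P$ in a non-zero covector $\eta_\ast$; if $X$ were an observer field, $m^{ab}_{AB}(\eta_\ast)_a(\eta_\ast)_b$ would have to be positive definite, contradicting the previous sentence. Hence a time function exists but an observer field does not, which is precisely an ultrahyperbolic-type breakdown.

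The step I expect to be the main obstacle is the bookkeeping in case~(iii): the diagonal coefficients $A-\lambda_j^2$ can be negative for some unit vectors $(\eta_1,\eta_2,\eta_3)$ even when $\lambda_0^2<1$, so one must confirm that the rank-one term $v\otimes v$ always compensates, that is, that $\mathrm{diag}(A-\lambda_1^2,A-\lambda_2^2,A-\lambda_3^2)+b\,b^{T}$ with $b_j=\eta_j\lambda_j$ is positive definite uniformly in $\eta$. I expect this to follow from the fact that at most one of the $A-\lambda_j^2$ can be negative, together with a Schur-complement estimate, with the sharpness of the threshold $\lambda_0^2=1$ being the delicate point. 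A secondary point needing care is the degenerate-kernel case~(ii), where the exceptional null frame must be chosen compatibly with the sign constraint imposed by positive semidefiniteness of $\phi^*h$, as in the references cited before the statement.
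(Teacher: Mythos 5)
Your overall route is the paper's own: the $\sigma_1$ term makes every Minkowski time function admissible by Theorem~\ref{thm:existtimefn}, so the whole question reduces to the existence or non-existence of observer fields, analysed in the $d\phi$-adapted frame through the explicit components $m^{33}_{AB}$, $m^{30}_{AB}$, $m^{31}_{AB}$. Your case~(i) (via $X\phi=0$ and the positivity of the $\sigma_1$ block), your deferral of the null case~(ii), and your breakdown argument --- exhibiting a fixed $2$-plane of covectors on which $m^{ab}_{AB}\eta_a\eta_b$ is never positive definite and intersecting it with the $3$-dimensional annihilator of any candidate $X$ --- all coincide with what the paper does; the dimension count at the end is exactly the step the paper leaves implicit, and you are right to make it explicit.

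The one genuine gap sits where you put it, in case~(iii), and your proposed repair does not quite close it. The reduction to positivity of $\mathrm{diag}(A-\lambda_1^2,A-\lambda_2^2,A-\lambda_3^2)+bb^{T}$ with $b_j=\eta_j\lambda_j$ is correct (and already more than the paper records, which only checks the diagonal components $m^{jj}_{AB}$), and your observation that at most one entry $A-\lambda_j^2$ can be negative is true; but ``one negative diagonal entry plus a rank-one update plus a Schur complement'' is not a proof --- a rank-one update need not cure a negative diagonal entry (consider $\mathrm{diag}(-1,\epsilon)+e_2\otimes e_2$), so the specific form of $b$ must be used. What closes it: for a unit $w$ one computes $w^{T}(D+bb^{T})w=(1-\lambda_0^2)+P$ with $P=\sum_j\lambda_j^2(1-\eta_j^2-w_j^2)+(\sum_j\eta_jw_j\lambda_j)^2$, and $P\geq 0$ always. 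Indeed, since at least one spatial eigenvalue vanishes by rank considerations, it suffices to take $\eta,w$ in the plane of the two surviving eigenvalues $\lambda\geq\mu$ (kernel components of $\eta$ or $w$ only add the nonnegative quantity $(1-\alpha^2)(\lambda^2+\mu^2-|\Lambda w|^2)$); writing $\eta=(\cos\theta,\sin\theta)$, $w=(\cos\psi,\sin\psi)$ there and setting $x=\cos(\theta-\psi)$, $y=\cos(\theta+\psi)$, one finds the identity $P=\bigl(\frac12(\lambda+\mu)x-\frac12(\lambda-\mu)y\bigr)^2\geq 0$. Hence the spatial block is bounded below by $1-\lambda_0^2>0$ and the $f'_0$ block by $1$, which also exhibits the sharpness of the threshold $\lambda_0^2=1$. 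With this lemma inserted your proof is complete and agrees with the paper's.
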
 
\begin{rmk}
The number ``1'' appearing in the third condition depends on the
normalisation chosen for the Lagrangian, which dictates the
interaction strength between the wave map and fluid-like terms. This
result should be compared with the conclusion drawn by Crutchfield and
Bell in \cite{CruBel1994}, where their linear analysis shows that a
sufficient criterion for break down of hyperbolicity is, in the above
language, $\lambda_0^2 > 1 + \lambda_1^2 + \lambda_2^2 + \lambda_3^2$.
So the present result sharpens the regime for which hyperbolicity
fails. Note that this failure of hyperbolicity is not merely that of
the regular hyperbolicity method. In parts of this regime it can be
shown explicitly that the associated linear system is ill-posed; see
the next remark. 
\end{rmk}
\begin{rmk}
Fixing $\lambda_3 = 0$ and $\lambda_1^2 < \lambda_0^2 - 1 <
\lambda_2^2$ (or $\lambda_0^2 > 1 + \lambda_1^2 + \lambda_2^2$, which
is the Crutchfield-Bell condition), it
is easy to check (using Descartes' rule of signs) that the linear, constant
coefficient equation $m^{ab}_{AB} \partial^2_{ab}\psi^B = 0$
cannot be hyperbolic (in the sense that its polynomial symbol has the
requisite number of real roots). Recall that hyperbolicity requires
there to be a hyperbolic direction $\eta$ such that, for any $\zeta$
transverse to $\eta$, the polynomial $M(s) = \det\left[m_{AB}(\zeta +
s\eta, \zeta + s\eta)\right]$ has only real roots \cite{Gardin1959,
Horman1976}. We fix $\zeta = f_3$. Using that $\lambda_3 = 0$, we can
exploit a symmetry condition that if $\eta$ is hyperbolic, so will
$\eta - 2 g(\eta,f_3) f_3$. Using that the
hyperbolic directions form a convex cone \cite{Gardin1959}, we can
assume the would-be hyperbolic direction is orthogonal to $f_3$. For
such an $\eta$, we have that $M(s)$ is an even polynomial. The
computations given before the statement of the theorem implies that
$\lim_{s\to\pm\infty}M(s)/s^6 < 0$, with $M(0) < 0$. Therefore
$M(s)$, a sixth degree polynomial, can have at most 4 real roots. The
same argument, reversing the role of $\eta,\zeta$, can also be used 
to rule out $f_3$ as a hyperbolic direction, proving the claim. 
\end{rmk}

If we assume the local well-posedness result claimed in
\cite{Christ2000}, then the above theorem implies the following
well-posedness property for the Cauchy problem of the Skyrme model.
\begin{cor}
The Cauchy problem for the Skyrme model with \emph{almost stationary}
initial data, where almost stationary is read to satisfy the
hypotheses of Theorem \ref{thm:skyrmehyperregime}, is locally
well-posed. In particular, a small perturbation of a static
Skyrmion configuration gives rise to a well-defined evolution. 
\end{cor}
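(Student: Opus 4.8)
The plan is to read Theorem~\ref{thm:skyrmehyperregime} as asserting that almost stationary data is regularly hyperbolic for the Skyrme system, and then to invoke the local well-posedness statement of \cite{Christ2000} that we are assuming: the existence of a time function $t$ and an observer field $X$ for the coefficient tensor $m^{ab}_{AB}$ guarantees Hadamard well-posedness of the Cauchy problem with data prescribed on a level set of $t$. For the Skyrme model the time function costs nothing: the Lagrangian contains $\sigma_1$ with non-zero coefficient, so Theorem~\ref{thm:existtimefn}(i) makes the standard Minkowski time $t$ a time function irrespective of the data. The substantive content of almost stationarity is thus the existence of an observer field, and this is exactly what the three alternatives of Theorem~\ref{thm:skyrmehyperregime} furnish --- the time-like vector in $\ker d\phi$ in cases (i) and (ii), or the time-like eigenvector of $D^\phi$ in case (iii).

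The first real step is to promote these pointwise statements to a space-time slab $[0,T]\times\mathbb{R}^3$, so that the energy estimate \eref{eq:sobolev} can actually be run. The conditions ``$m^{ab}_{AB}(dt)_a(dt)_b$ negative definite'' and ``$m^{ab}_{AB}\eta_a\eta_b$ positive definite for all non-zero $\eta$ with $\eta(X)=0$'' are open in the matrix $m^{ab}_{AB}$, which by \eref{eq:defprincipal} depends continuously on $(\phi,\partial\phi)$. Since a Skyrmion decays to the vacuum $D^\phi=0$ at spatial infinity, where $m^{ab}_{AB}=g^{ab}h_{AB}$ is uniformly regularly hyperbolic, and since almost stationarity can be quantified so that $(\phi,\partial\phi)$ stays in a fixed compact subset of the regularly hyperbolic regime (the data being close enough to stationary in an $H^k$ that embeds into $C^1$), the openness is uniform over the non-compact slice. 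A continuity/bootstrap argument, standard for quasilinear hyperbolic systems, then keeps $(t,X)$ a time-function/observer-field pair on a slab whose length is bounded below in terms of the size of the data and the margin of definiteness; with this the assumed theorem of \cite{Christ2000} applies and yields existence, uniqueness, and continuous dependence.

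For the final assertion I would exhibit $(t,X)$ for an exactly static Skyrmion and then perturb. If $\partial_t\phi\equiv0$ then $D^\phi=g^{-1}\circ\phi^*h$ annihilates $\partial_t$, so $\partial_t\in\ker d\phi$ at every point and alternative (i) of Theorem~\ref{thm:skyrmehyperregime} holds globally; moreover $X=\partial_t$ is an observer field, since in the decomposition of $m^{bd}_{BD}\eta_b\eta_d$ recorded in Section~\ref{sect:rh} the $\sigma_2$ piece is positive semi-definite when $X\phi=0$ while the $\sigma_1$ piece ($j=1$) is positive definite with no restriction on $d\phi$, so the sum is positive definite with a uniform lower bound. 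A small perturbation leaves $(\phi,\partial\phi)$ inside the open set carved out above, so $(t,\partial_t)$ stays a valid pair and the second paragraph applies. The main obstacle is precisely this uniform-in-space persistence: one must check that the decay of the background and the smallness of the perturbation keep the margin of definiteness from degenerating at infinity or under the flow, so that the existence time coming from \cite{Christ2000} does not collapse; the residual dependence on the unproven well-posedness claim of \cite{Christ2000} is acknowledged in the statement and is to be settled in \cite{SpeWon-pPP}.
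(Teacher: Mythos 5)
Your proposal is correct and follows essentially the same route as the paper: the corollary is obtained by combining Theorem \ref{thm:skyrmehyperregime} with the (assumed) local well-posedness claim of \cite{Christ2000}, and your verification that $\partial_t$ lies in $\ker d\phi$ for a static configuration, together with the openness of the definiteness conditions on $m^{ab}_{AB}$ under small $C^1$ perturbations, is exactly the intended content of ``almost stationary.'' The only point you omit is the paper's remark that for perturbations of a static configuration one can bypass regular hyperbolicity altogether and apply the techniques of \cite{HuKaMa1976} directly, which removes the dependence on the unproven claim in that special case.
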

\begin{rmk}
The existence of static configurations to the Skyrme model is a
partially open problem. In certain symmetry classes the existence is
known, see \cite{Esteba1986, KapLad1983, LinYan2004}. 
In the case of the perturbation of a static configuration, local
well-posedness also follows, by the computations above, using the
techniques of \cite{HuKaMa1976}. The framework of regular 
hyperbolicity is not necessary in that regime. 
\end{rmk}
\begin{rmk}
It has been pointed out to the author by Dan Geba that one
automatically has local well-posedness with arbitrary smooth initial
data, if one were to consider the \emph{spherically symmetric} Skyrme
model. Indeed, once the symmetry is imposed, the target space (being a
quotient of a three dimensional manifold by a symmetry with two
dimensional orbits) is effectively one dimensional, and the
ultrahyperbolic-type breakdown which is due to the \emph{internal
structure} of the target manifold cannot occur (see also Section
\ref{sect:canstrvseh}). 

However, in this situation one has a hyperbolic illustration of the
failure of the \emph{Coleman principle}, analogous to that observed by
Kapitanski and Ladyzhenskaya \cite{KapLad1983}. More precisely, the
existence of uniqueness of solutions in the spherically symmetric
class does not automatically guarantee that said solutions are in fact
unique in the non-spherically symmetric class. This is because the
spherically symmetric solution is only guaranteed to be stable under
spherically symmetric perturbations. In our case, however, the above
analysis shows that the linearized system is not necessarily stable
under asymmetric perturbations, and therefore we cannot use Cauchy
stability to conclude that symmetric initial data must lead to
symmetric solutions!
\end{rmk}


\ack The author is supported by the Commission of the European
Communities, ERC Grant Agreement No 208007. The author would like to thank 
Nick Manton and Claude Warnick for introducing him to the problem, and 
to Mihalis Dafermos, Gary Gibbons and Jared Speck for useful discussions.

\bibliographystyle{unsrt}

\end{document}